\definecolor{link-color}{cmyk}{0.8,0.3,0.,0}
\renewcommand*\Call[2]{\textproc{#1}(#2)}
\DeclareMathAlphabet{\pazocal}{OMS}{zplm}{m}{n}
\newcommand*{\balancecolsandclearpage}{%
  \close@column@grid
  \cleardoublepage
  \twocolumngrid
}
\theoremstyle{definition}
\newtheorem{definition}{Definition}
\newtheorem{theorem}{Theorem}
\newtheorem{corollary}[theorem]{Corollary}
\newtheorem{lemma}[theorem]{Lemma}
\DeclareMathOperator*{\argmax}{arg\,max}
\DeclareMathOperator{\flatf}{flat}
\newcommand{\Hf}{\pazocal{H}}
\renewcommand{\P}{\pazocal{P}}
\newcommand{\T}{\pazocal{T}}
\newcommand{\checksym}{\ding{51}}
\newcommand{\simas}[1]{\mathbin{\overset{#1}{\kern\z@\sim}}}%
\newcommand{\nsimas}[1]{\mathbin{\overset{#1}{\kern\z@\nsim}}}%
\newsavebox{\mybox}\newsavebox{\mysim}
\newcommand{\simass}[1]{%
  \savebox{\mybox}{\hbox{\kern3pt$\scriptstyle#1$\kern3pt}}%
  \savebox{\mysim}{\hbox{$\sim$}}%
  \mathbin{\overset{#1}{\kern\z@\resizebox{\wd\mybox}{\ht\mysim}{$\sim$}}}%
}
\keywords{Complex networks, Community detection, Louvain algorithm, Leiden algorithm}
\begin{document}

\title{From Louvain to Leiden: guaranteeing well-connected communities}
\author{V.A. Traag}
\email{v.a.traag@cwts.leidenuniv.nl}
\affiliation{Centre for Science and Technology Studies, Leiden University, the Netherlands}
\author{L. Waltman}
\affiliation{Centre for Science and Technology Studies, Leiden University, the Netherlands}
\author{N.J. van Eck}
\affiliation{Centre for Science and Technology Studies, Leiden University, the Netherlands}

\date{\today}

\begin{abstract}
  Community detection is often used to understand the structure of large and complex networks.
  One of the most popular algorithms for uncovering community structure is the so-called Louvain algorithm.
  We show that this algorithm has a major defect that largely went unnoticed until now: the Louvain algorithm may yield arbitrarily badly connected communities.
  In the worst case, communities may even be disconnected, especially when running the algorithm iteratively.
  In our experimental analysis, we observe that up to $25\%$ of the communities are badly connected and up to $16\%$ are disconnected.
  To address this problem, we introduce the Leiden algorithm.
  We prove that the Leiden algorithm yields communities that are guaranteed to be connected.
  In addition, we prove that, when the Leiden algorithm is applied iteratively, it converges to a partition in which all subsets of all communities are locally optimally assigned.
  Furthermore, by relying on a fast local move approach, the Leiden algorithm runs faster than the Louvain algorithm.
  We demonstrate the performance of the Leiden algorithm for several benchmark and real-world networks.
  We find that the Leiden algorithm is faster than the Louvain algorithm and uncovers better partitions, in addition to providing explicit guarantees.
\end{abstract}

\maketitle

\section{Introduction}

\noindent In many complex networks, nodes cluster and form relatively dense groups---often called communities~\cite{Fortunato2010,Porter2009}.
Such a modular structure is usually not known beforehand.
Detecting communities in a network is therefore an important problem.
One of the best-known methods for community detection is called modularity~\cite{Newman2004Finding}.
This method tries to maximise the difference between the actual number of edges in a community and the expected number of such edges.
We denote by $e_c$ the actual number of edges in community $c$.
The expected number of edges can be expressed as $\frac{K_c^2}{2m}$, where $K_c$ is the sum of the degrees of the nodes in community $c$ and $m$ is the total number of edges in the network.
This way of defining the expected number of edges is based on the so-called configuration model.
Modularity is given by
\begin{equation}
  \Hf = \frac{1}{2m} \sum_c \left(e_c - \gamma \frac{K_c^2}{2m} \right),
\end{equation}
where $\gamma > 0$ is a resolution parameter~\cite{Reichardt2006Statistical}.
Higher resolutions lead to more communities, while lower resolutions lead to fewer communities.

Optimising modularity is NP-hard~\cite{Brandes}, and consequentially many heuristic algorithms have been proposed, such as hierarchical agglomeration~\cite{Clauset2004}, extremal optimisation~\cite{Duch2005}, simulated annealing~\cite{Reichardt2006Statistical,Guimera2005Functional} and spectral~\cite{Newman2006Finding} algorithms.
One of the most popular algorithms to optimise modularity is the so-called Louvain algorithm~\cite{Blondel2008}, named after the location of its authors.
It was found to be one of the fastest and best performing algorithms in comparative analyses~\cite{Lancichinetti2009,Yang2016}, and it is one of the most-cited works in the community detection literature.

Although originally defined for modularity, the Louvain algorithm can also be used to optimise other quality functions.
An alternative quality function is the Constant Potts Model (CPM)~\cite{Traag2011}, which overcomes some limitations of modularity.
CPM is defined as
\begin{equation}
  \Hf = \sum_c \left[e_c - \gamma \binom{n_c}{2} \right],
  \label{eq:CPM_simple}
\end{equation}
where $n_c$ is the number of nodes in community $c$.
The interpretation of the resolution parameter $\gamma$ is quite straightforward.
The parameter functions as a sort of threshold: communities should have a density of at least $\gamma$, while the density between communities should be lower than $\gamma$.
Higher resolutions lead to more communities and lower resolutions lead to fewer communities, similarly to the resolution parameter for modularity.

In this paper, we show that the Louvain algorithm has a major problem, for both modularity and CPM.
The algorithm may yield arbitrarily badly connected communities, over and above the well-known issue of the resolution limit~\cite{Fortunato:2007p183} (Section~\ref{sec:disconnected}).
Communities may even be internally disconnected.
To address this important shortcoming, we introduce a new algorithm that is faster, finds better partitions and provides explicit guarantees and bounds (Section~\ref{sec:leiden}).
The new algorithm integrates several earlier improvements, incorporating a combination of smart local move~\cite{Waltman2013}, fast local move~\cite{Ozaki,Bae2014} and random neighbour move~\cite{Traag2015a}.
We prove that the new algorithm is guaranteed to produce partitions in which all communities are internally connected.
In addition, we prove that the algorithm converges to an asymptotically stable partition in which all subsets of all communities are locally optimally assigned.
The quality of such an asymptotically stable partition provides an upper bound on the quality of an optimal partition.
Finally, we demonstrate the excellent performance of the algorithm for several benchmark and real-world networks (Section~\ref{sec:analysis}).
To ensure readability of the paper to the broadest possible audience, we have chosen to relegate all technical details to appendices.
The main ideas of our algorithm are explained in an intuitive way in the main text of the paper.
We name our algorithm the \emph{Leiden algorithm}, after the location of its authors.

\section{Louvain algorithm}

\begin{figure}[tb]
  \begin{center}
    \includegraphics{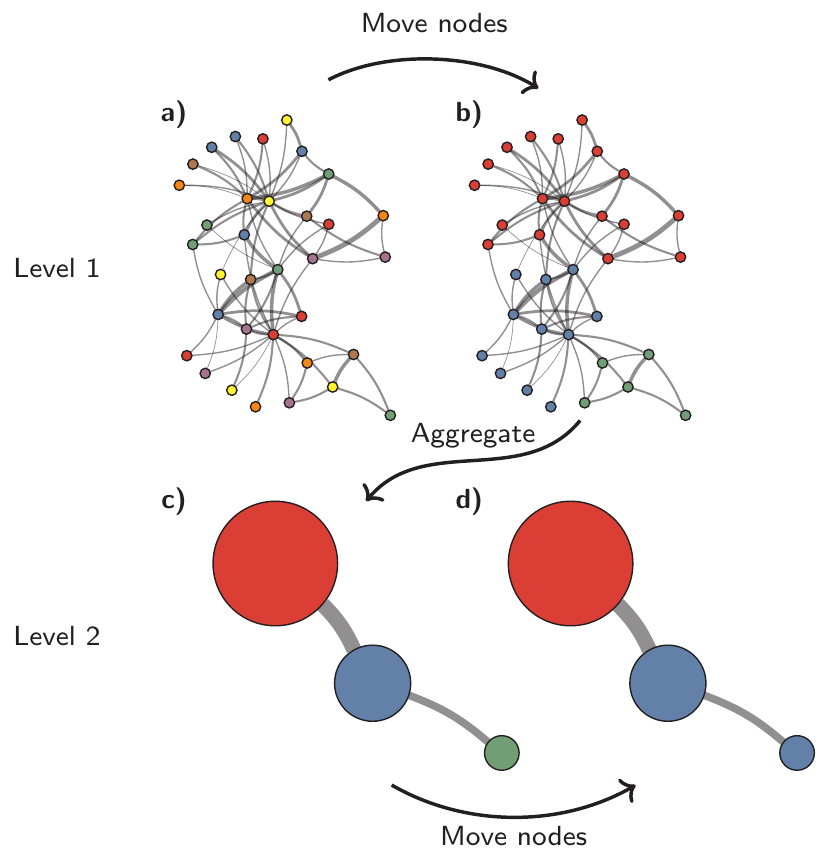}
  \end{center}
  \caption{\textbf{Louvain algorithm}.
    The Louvain algorithm starts from a singleton partition in which each node is in its own community (a).
    The algorithm moves individual nodes from one community to another to find a partition (b).
    Based on this partition, an aggregate network is created (c).
    The algorithm then moves individual nodes in the aggregate network (d).
    These steps are repeated until the quality cannot be increased further.
    }
  \label{fig:louvain_illustration}
\end{figure}

\noindent The Louvain algorithm~\cite{Blondel2008} is very simple and elegant.
The algorithm optimises a quality function such as modularity or CPM in two elementary phases: (1) local moving of nodes; and (2) aggregation of the network.
In the local moving phase, individual nodes are moved to the community that yields the largest increase in the quality function.
In the aggregation phase, an aggregate network is created based on the partition obtained in the local moving phase.
Each community in this partition becomes a node in the aggregate network.
The two phases are repeated until the quality function cannot be increased further.
The Louvain algorithm is illustrated in Fig.~\ref{fig:louvain_illustration} and summarised in pseudo-code in Algorithm~\ref{algo:louvain} in Appendix~\ref{sec:code_notation}.

Usually, the Louvain algorithm starts from a singleton partition, in which each node is in its own community.
However, it is also possible to start the algorithm from a different partition~\cite{Waltman2013}.
In particular, in an attempt to find better partitions, multiple consecutive iterations of the algorithm can be performed, using the partition identified in one iteration as starting point for the next iteration.

\subsection{Badly connected communities}
\label{sec:disconnected}

\begin{figure}[bt]
  \begin{center}
    \includegraphics{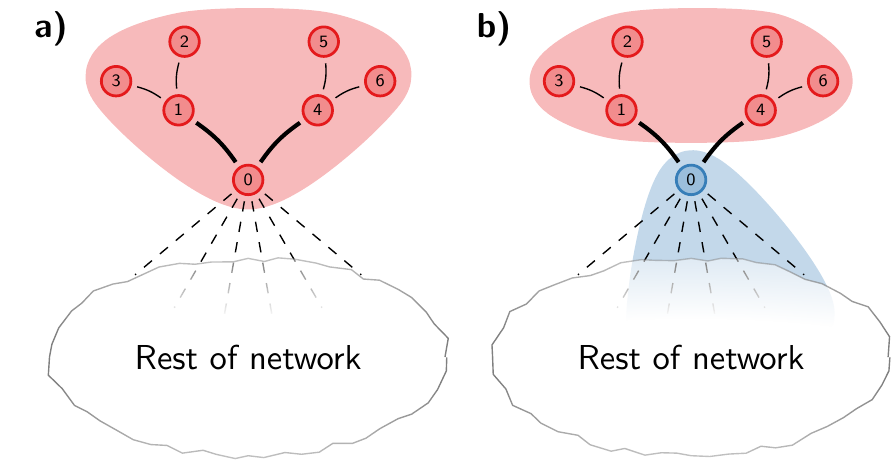}
  \end{center}
  \caption{\textbf{Disconnected community.}
    Consider the partition shown in (a).
    When node 0 is moved to a different community, the red community becomes internally disconnected, as shown in (b).
    However, nodes 1--6 are still locally optimally assigned, and therefore these nodes will stay in the red community.
    }
  \label{fig:disconnected_community}
\end{figure}

We now show that the Louvain algorithm may find arbitrarily badly connected communities.
In particular, we show that Louvain may identify communities that are internally disconnected.
That is, one part of such an internally disconnected community can reach another part only through a path going outside the community.
Importantly, the problem of disconnected communities is not just a theoretical curiosity.
As we will demonstrate in Section~\ref{sec:analysis}, the problem occurs frequently in practice when using the Louvain algorithm.
Perhaps surprisingly, iterating the algorithm aggravates the problem, even though it does increase the quality function.

In the Louvain algorithm, a node may be moved to a different community while it may have acted as a bridge between different components of its old community.
Removing such a node from its old community disconnects the old community.
One may expect that other nodes in the old community will then also be moved to other communities.
However, this is not necessarily the case, as the other nodes may still be sufficiently strongly connected to their community, despite the fact that the community has become disconnected.

To elucidate the problem, we consider the example illustrated in Fig.~\ref{fig:disconnected_community}.
The numerical details of the example can be found in Appendix~\ref{sec:disconnected_example}.
The thick edges in Fig.~\ref{fig:disconnected_community} represent stronger connections, while the other edges represent weaker connections.
At some point, the Louvain algorithm may end up in the community structure shown in Fig.~\ref{fig:disconnected_community}(a).
Nodes 0--6 are in the same community.
Nodes 1--6 have connections only within this community, whereas node 0 also has many external connections.
The algorithm continues to move nodes in the rest of the network.
At some point, node 0 is considered for moving.
When a sufficient number of neighbours of node 0 have formed a community in the rest of the network, it may be optimal to move node 0 to this community, thus creating the situation depicted in Fig.~\ref{fig:disconnected_community}(b).
In this new situation, nodes 2, 3, 5 and 6 have only internal connections.
These nodes are therefore optimally assigned to their current community.
On the other hand, after node 0 has been moved to a different community, nodes 1 and 4 have not only internal but also external connections.
Nevertheless, depending on the relative strengths of the different connections, these nodes may still be optimally assigned to their current community.
In that case, nodes 1--6 are all locally optimally assigned, despite the fact that their community has become disconnected.
Clearly, it would be better to split up the community.
Nodes 1--3 should form a community and nodes 4--6 should form another community.
However, the Louvain algorithm does not consider this possibility, since it considers only individual node movements.
Moreover, when no more nodes can be moved, the algorithm will aggregate the network.
When a disconnected community has become a node in an aggregate network, there are no more possibilities to split up the community.
Hence, the community remains disconnected, unless it is merged with another community that happens to act as a bridge.

Obviously, this is a worst case example, showing that disconnected communities may be identified by the Louvain algorithm.
More subtle problems may occur as well, causing Louvain to find communities that are connected, but only in a very weak sense.
Hence, in general, Louvain may find arbitrarily badly connected communities.

This problem is different from the well-known issue of the resolution limit of modularity~\cite{Fortunato:2007p183}.
Due to the resolution limit, modularity may cause smaller communities to be clustered into larger communities.
In other words, modularity may ``hide'' smaller communities and may yield communities containing significant substructure.
CPM does not suffer from this issue~\cite{Traag2011}.
Nevertheless, when CPM is used as the quality function, the Louvain algorithm may still find arbitrarily badly connected communities.
Hence, the problem of Louvain outlined above is independent from the issue of the resolution limit.
In the case of modularity, communities may have significant substructure both because of the resolution limit and because of the shortcomings of Louvain.

In fact, although it may seem that the Louvain algorithm does a good job at finding high quality partitions, in its standard form the algorithm provides only one guarantee: the algorithm yields partitions for which it is guaranteed that no communities can be merged.
In other words, communities are guaranteed to be well separated.
Somewhat stronger guarantees can be obtained by iterating the algorithm, using the partition obtained in one iteration of the algorithm as starting point for the next iteration.
When iterating Louvain, the quality of the partitions will keep increasing until the algorithm is unable to make any further improvements.
At this point, it is guaranteed that each individual node is optimally assigned.
In this iterative scheme, Louvain provides two guarantees: (1) no communities can be merged and (2) no nodes can be moved.

Contrary to what might be expected, iterating the Louvain algorithm aggravates the problem of badly connected communities, as we will also see in Section~\ref{sec:analysis}.
This is not too difficult to explain.
After the first iteration of the Louvain algorithm, some partition has been obtained.
In the first step of the next iteration, Louvain will again move individual nodes in the network.
Some of these nodes may very well act as bridges, similarly to node $0$ in the above example.
By moving these nodes, Louvain creates badly connected communities.
Moreover, Louvain has no mechanism for fixing these communities.
Iterating the Louvain algorithm can therefore be seen as a double-edged sword: it improves the partition in some way, but degrades it in another way.

The problem of disconnected communities has been observed before in the context of the label propagation algorithm~\cite{Raghavan2007}.
However, so far this problem has never been studied for the Louvain algorithm.
Moreover, the deeper significance of the problem was not recognised: disconnected communities are merely the most extreme manifestation of the problem of arbitrarily badly connected communities.
Trying to fix the problem by simply considering the connected components of communities~\cite{Luecken2016,Wolf2018,Raghavan2007} is unsatisfactory because it addresses only the most extreme case and does not resolve the more fundamental problem.
We therefore require a more principled solution, which we will introduce in the next section.

\begin{figure*}[tb]
  \begin{center}
    \includegraphics[width=\textwidth]{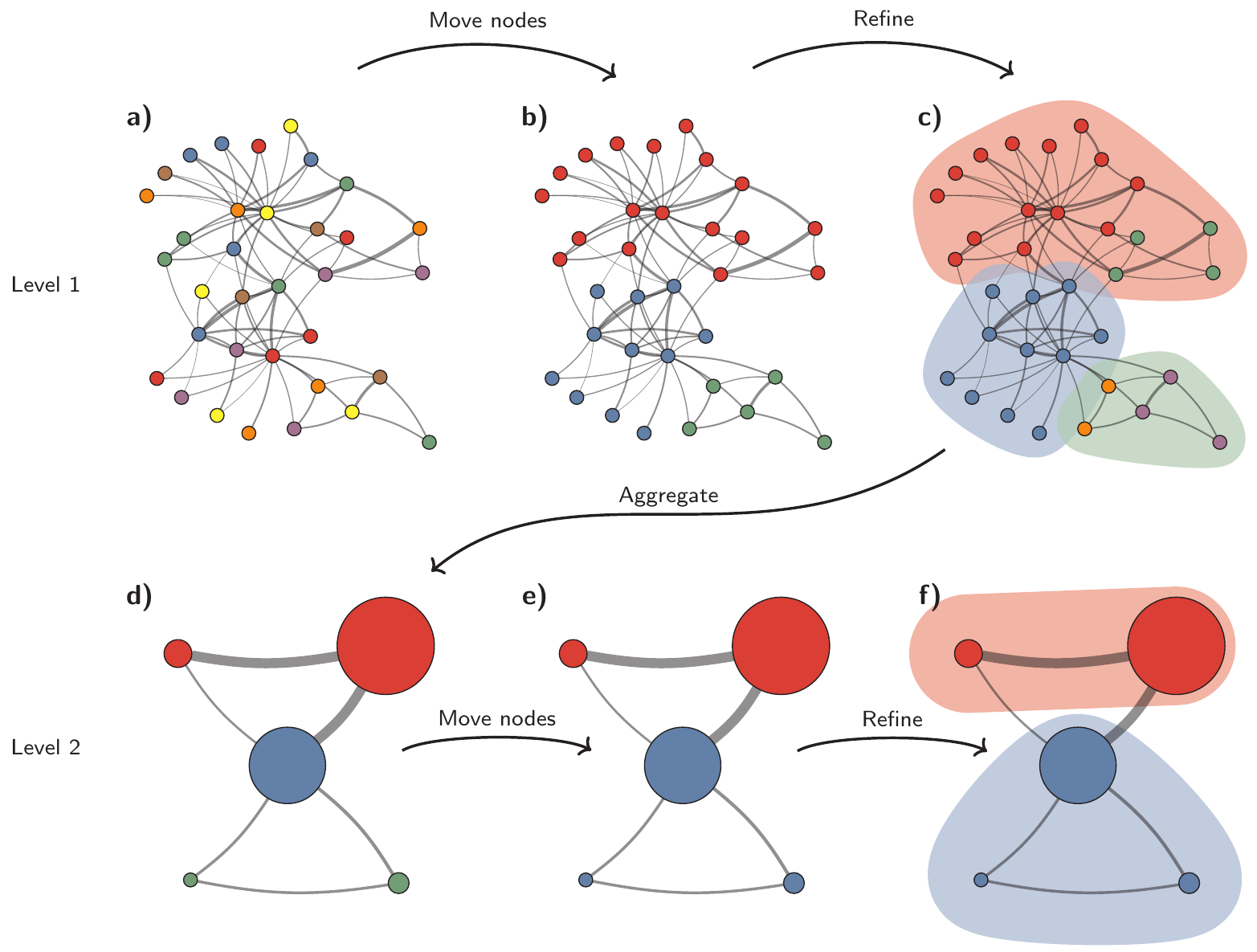}
  \end{center}
  \caption{\textbf{Leiden algorithm}.
    The Leiden algorithm starts from a singleton partition (a).
    The algorithm moves individual nodes from one community to another to find a partition (b), which is then refined (c).
    An aggregate network (d) is created based on the refined partition, using the non-refined partition to create an initial partition for the aggregate network.
    For example, the red community in (b) is refined into two subcommunities in (c), which after aggregation become two separate nodes in (d), both belonging to the same community.
    The algorithm then moves individual nodes in the aggregate network (e).
	In this case, refinement does not change the partition (f).
    These steps are repeated until no further improvements can be made.
    }
  \label{fig:leiden_illustration}
\end{figure*}

\section{Leiden algorithm}
\label{sec:leiden}

\noindent We here introduce the Leiden algorithm, which guarantees that communities are well connected.
The Leiden algorithm is partly based on the previously introduced smart local move algorithm~\cite{Waltman2013}, which itself can be seen as an improvement of the Louvain algorithm.
The Leiden algorithm also takes advantage of the idea of speeding up the local moving of nodes~\cite{Ozaki,Bae2014} and the idea of moving nodes to random neighbours~\cite{Traag2015a}.
We consider these ideas to represent the most promising directions in which the Louvain algorithm can be improved, even though we recognise that other improvements have been suggested as well~\cite{Rotta2011}.
The Leiden algorithm consists of three phases: (1) local moving of nodes, (2) refinement of the partition and (3) aggregation of the network based on the refined partition, using the non-refined partition to create an initial partition for the aggregate network.
The Leiden algorithm is considerably more complex than the Louvain algorithm.
Fig.~\ref{fig:leiden_illustration} provides an illustration of the algorithm.
The algorithm is described in pseudo-code in Algorithm~\ref{algo:leiden} in Appendix~\ref{sec:code_notation}.

In the Louvain algorithm, an aggregate network is created based on the partition $\P$ resulting from the local moving phase.
The idea of the refinement phase in the Leiden algorithm is to identify a partition $\P_\text{refined}$ that is a refinement of $\P$.
Communities in $\P$ may be split into multiple subcommunities in $\P_\text{refined}$.
The aggregate network is created based on the partition $\P_\text{refined}$.
However, the initial partition for the aggregate network is based on $\P$, just like in the Louvain algorithm.
By creating the aggregate network based on $\P_\text{refined}$ rather than $\P$, the Leiden algorithm has more room for identifying high-quality partitions.
In fact, by implementing the refinement phase in the right way, several attractive guarantees can be given for partitions produced by the Leiden algorithm.

The refined partition $\P_\text{refined}$ is obtained as follows.
Initially, $\P_\text{refined}$ is set to a singleton partition, in which each node is in its own community.
The algorithm then locally merges nodes in $\P_\text{refined}$: nodes that are on their own in a community in $\P_\text{refined}$ can be merged with a different community.
Importantly, mergers are performed only within each community of the partition $\P$.
In addition, a node is merged with a community in $\P_\text{refined}$ only if both are sufficiently well connected to their community in $\P$.
After the refinement phase is concluded, communities in $\P$ often will have been split into multiple communities in $\P_\text{refined}$, but not always.

In the refinement phase, nodes are not necessarily greedily merged with the community that yields the largest increase in the quality function.
Instead, a node may be merged with any community for which the quality function increases.
The community with which a node is merged is selected randomly (similar to~\cite{Traag2015a}).
The larger the increase in the quality function, the more likely a community is to be selected.
The degree of randomness in the selection of a community is determined by a parameter $\theta > 0$.
Randomness in the selection of a community allows the partition space to be explored more broadly.
Node mergers that cause the quality function to decrease are not considered.
This contrasts with optimisation algorithms such as simulated annealing, which do allow the quality function to decrease~\cite{Guimera2005Functional,Reichardt2006Statistical}.
Such algorithms are rather slow, making them ineffective for large networks.
Excluding node mergers that decrease the quality function makes the refinement phase more efficient.
As we prove in Appendix~\ref{sec:nondecreasing_move_sequences}, even when node mergers that decrease the quality function are excluded, the optimal partition of a set of nodes can still be uncovered.
This is not the case when nodes are greedily merged with the community that yields the largest increase in the quality function.
In that case, some optimal partitions cannot be found, as we show in Appendix~\ref{sec:greedy_move_sequences}.

Another important difference between the Leiden algorithm and the Louvain algorithm is the implementation of the local moving phase.
Unlike the Louvain algorithm, the Leiden algorithm uses a fast local move procedure in this phase.
Louvain keeps visiting all nodes in a network until there are no more node movements that increase the quality function.
In doing so, Louvain keeps visiting nodes that cannot be moved to a different community.
In the fast local move procedure in the Leiden algorithm, only nodes whose neighbourhood has changed are visited.
This is similar to ideas proposed recently as ``pruning''~\cite{Ozaki} and in a slightly different form as ``prioritisation''~\cite{Bae2014}.
The fast local move procedure can be summarised as follows.
We start by initialising a queue with all nodes in the network.
The nodes are added to the queue in a random order.
We then remove the first node from the front of the queue and we determine whether the quality function can be increased by moving this node from its current community to a different one.
If we move the node to a different community, we add to the rear of the queue all neighbours of the node that do not belong to the node's new community and that are not yet in the queue.
We keep removing nodes from the front of the queue, possibly moving these nodes to a different community.
This continues until the queue is empty.
For a full specification of the fast local move procedure, we refer to the pseudo-code of the Leiden algorithm in Algorithm~\ref{algo:leiden} in Appendix~\ref{sec:code_notation}.
Using the fast local move procedure, the first visit to all nodes in a network in the Leiden algorithm is the same as in the Louvain algorithm.
However, after all nodes have been visited once, Leiden visits only nodes whose neighbourhood has changed, whereas Louvain keeps visiting all nodes in the network.
In this way, Leiden implements the local moving phase more efficiently than Louvain.

\begin{table}[tb]
  \caption{Overview of the guarantees provided by the Louvain algorithm and the Leiden algorithm.}
  \label{tbl:guarantees}
  \begin{tabular}{lrcc}
    \toprule
    &                          & Louvain   & Leiden    \\
    \midrule
    \multirow{2}{2cm}{Each iteration}
    & $\gamma$-separation      & \checksym & \checksym \\
    & $\gamma$-connectivity    &           & \checksym \\
    \midrule
    \multirow{2}{2cm}{Stable iteration}
    & Node optimality          & \checksym & \checksym \\
    & Subpartition $\gamma$-density &      & \checksym \\
    \midrule
    \multirow{2}{2cm}{Asymptotic}
    & Uniform $\gamma$-density &           & \checksym \\
    & Subset optimality        &           & \checksym \\
    \bottomrule
  \end{tabular}
\end{table}

\subsection{Guarantees}

We now consider the guarantees provided by the Leiden algorithm.
The algorithm is run iteratively, using the partition identified in one iteration as starting point for the next iteration.
We can guarantee a number of properties of the partitions found by the Leiden algorithm at various stages of the iterative process.
Below we offer an intuitive explanation of these properties.
We provide the full definitions of the properties as well as the mathematical proofs in Appendix~\ref{sec:guarantees}.

After each iteration of the Leiden algorithm, it is guaranteed that:
\begin{enumerate}
  \item All communities are $\gamma$-separated.
  \item All communities are $\gamma$-connected.
\end{enumerate}
In these properties, $\gamma$ refers to the resolution parameter in the quality function that is optimised, which can be either modularity or CPM.
The property of $\gamma$-separation is also guaranteed by the Louvain algorithm.
It states that there are no communities that can be merged.
The property of $\gamma$-connectivity is a slightly stronger variant of ordinary connectivity.
As discussed in Section~\ref{sec:disconnected}, the Louvain algorithm does not guarantee connectivity.
It therefore does not guarantee $\gamma$-connectivity either.

An iteration of the Leiden algorithm in which the partition does not change is called a stable iteration.
After a stable iteration of the Leiden algorithm, it is guaranteed that:
\begin{enumerate}[resume]
  \item All nodes are locally optimally assigned.
  \item All communities are subpartition $\gamma$-dense.
\end{enumerate}
Node optimality is also guaranteed after a stable iteration of the Louvain algorithm.
It means that there are no individual nodes that can be moved to a different community.
Subpartition $\gamma$-density is not guaranteed by the Louvain algorithm.
A community is subpartition $\gamma$-dense if it can be partitioned into two parts such that: (1) the two parts are well connected to each other; (2) neither part can be separated from its community; and (3) each part is also subpartition $\gamma$-dense itself.
Subpartition $\gamma$-density does not imply that individual nodes are locally optimally assigned.
It only implies that individual nodes are well connected to their community.

In the case of the Louvain algorithm, after a stable iteration, all subsequent iterations will be stable as well.
Hence, no further improvements can be made after a stable iteration of the Louvain algorithm.
This contrasts with the Leiden algorithm.
After a stable iteration of the Leiden algorithm, the algorithm may still be able to make further improvements in later iterations.
In fact, when we keep iterating the Leiden algorithm, it will converge to a partition for which it is guaranteed that:
\begin{enumerate}[resume]
  \item All communities are uniformly $\gamma$-dense.
  \item All communities are subset optimal.
\end{enumerate}
A community is uniformly $\gamma$-dense if there are no subsets of the community that can be separated from the community.
Uniform $\gamma$-density means that no matter how a community is partitioned into two parts, the two parts will always be well connected to each other.
Furthermore, if all communities in a partition are uniformly $\gamma$-dense, the quality of the partition is not too far from optimal, as shown in Appendix~\ref{sec:bound_on_optimality}.
A community is subset optimal if all subsets of the community are locally optimally assigned.
That is, no subset can be moved to a different community.
Subset optimality is the strongest guarantee that is provided by the Leiden algorithm.
It implies uniform $\gamma$-density and all the other above-mentioned properties.

An overview of the various guarantees is presented in Table~\ref{tbl:guarantees}.

\section{Experimental analysis}
\label{sec:analysis}

\begin{table}[tb]
\caption{Overview of the empirical networks and of the maximal modularity after $10$ replications of $10$ iterations each, both for the Louvain and for the Leiden algorithm.}
\label{tbl:real_networks}
\begin{tabular}{lrrrr}
  \toprule
                  &                &        & \multicolumn{2}{c}{Max. modularity}\\
  \cmidrule{4-5}
                  & Nodes          & Degree & Louvain  & Leiden                  \\
  \midrule
  DBLP\footnotemark[1]
                  &     $317\,080$ &  $6.6$ & $0.8262$ & $0.8387$                \\
  Amazon\footnotemark[1]
                  &     $334\,863$ &  $5.6$ & $0.9301$ & $0.9341$                \\
  IMDB\footnotemark[2]
                  &     $374\,511$ & $80.2$ & $0.7062$ & $0.7069$                \\
  Live Journal\footnotemark[1]
                  &  $3\,997\,962$ & $17.4$ & $0.7653$ & $0.7739$                \\
  Web of Science\footnotemark[3]
                  &  $9\,811\,130$ & $21.2$ & $0.7911$ & $0.7951$                \\
  Web UK\footnotemark[4]
                  & $39\,252\,879$ & $39.8$ & $0.9796$ & $0.9801$                \\
  \bottomrule
\end{tabular}
\footnotetext[1]{\url{https://snap.stanford.edu/data/}}
\footnotetext[2]{\url{https://sparse.tamu.edu/Barabasi/NotreDame_actors}}
\footnotetext[3]{Data cannot be shared due to license restrictions.}
\footnotetext[4]{\url{http://law.di.unimi.it/webdata/uk-2005/}}
\end{table}

In the previous section, we showed that the Leiden algorithm guarantees a number of properties of the partitions uncovered at different stages of the algorithm.
We also suggested that the Leiden algorithm is faster than the Louvain algorithm, because of the fast local move approach.
In this section, we analyse and compare the performance of the two algorithms in practice\footnote{We implemented both algorithms in Java, available from \href{https://github.com/CWTSLeiden/networkanalysis}{github.com/CWTSLeiden/networkanalysis} and deposited at Zenodo~\cite{code}.
Additionally, we implemented a Python package, available from \href{https://github.com/vtraag/leidenalg}{github.com/vtraag/leidenalg} and deposited at Zenodo~\cite{codePython}.}.
All experiments were run on a computer with 64 Intel Xeon E5-4667v3 2GHz CPUs and 1TB internal memory.
In all experiments reported here, we used a value of $0.01$ for the parameter $\theta$ that determines the degree of randomness in the refinement phase of the Leiden algorithm.
However, values of $\theta$ within a range of roughly $[0.0005, 0.1]$ all provide reasonable results, thus allowing for some, but not too much randomness.
We use six empirical networks in our analysis.
These are the same networks that were also studied in an earlier paper introducing the smart local move algorithm~\cite{Waltman2013}.
Table~\ref{tbl:real_networks} provides an overview of the six networks.
First, we show that the Louvain algorithm finds disconnected communities, and more generally, badly connected communities in the empirical networks.
Second, to study the scaling of the Louvain and the Leiden algorithm, we use benchmark networks, allowing us to compare the algorithms in terms of both computational time and quality of the partitions.
Finally, we compare the performance of the algorithms on the empirical networks.
We find that the Leiden algorithm commonly finds partitions of higher quality in less time.
The difference in computational time is especially pronounced for larger networks, with Leiden being up to $20$ times faster than Louvain in empirical networks.

\subsection{Badly connected communities}

We study the problem of badly connected communities when using the Louvain algorithm for several empirical networks.
For each community in a partition that was uncovered by the Louvain algorithm, we determined whether it is internally connected or not.
In addition, to analyse whether a community is badly connected, we ran the Leiden algorithm on the subnetwork consisting of all nodes belonging to the community.\footnote{
  We ensured that modularity optimisation for the subnetwork was fully consistent with modularity optimisation for the whole network~\cite{Traag2011}.}
The Leiden algorithm was run until a stable iteration was obtained.
When the Leiden algorithm found that a community could be split into multiple subcommunities, we counted the community as badly connected.
Note that if Leiden finds subcommunities, splitting up the community is guaranteed to increase modularity.
Conversely, if Leiden does not find subcommunities, there is no guarantee that modularity cannot be increased by splitting up the community.
Hence, by counting the number of communities that have been split up, we obtained a lower bound on the number of communities that are badly connected.
The count of badly connected communities also included disconnected communities.
For each network, we repeated the experiment $10$ times.
We used modularity with a resolution parameter of $\gamma = 1$ for the experiments.

\begin{figure}[tb]
  \centering
  \includegraphics{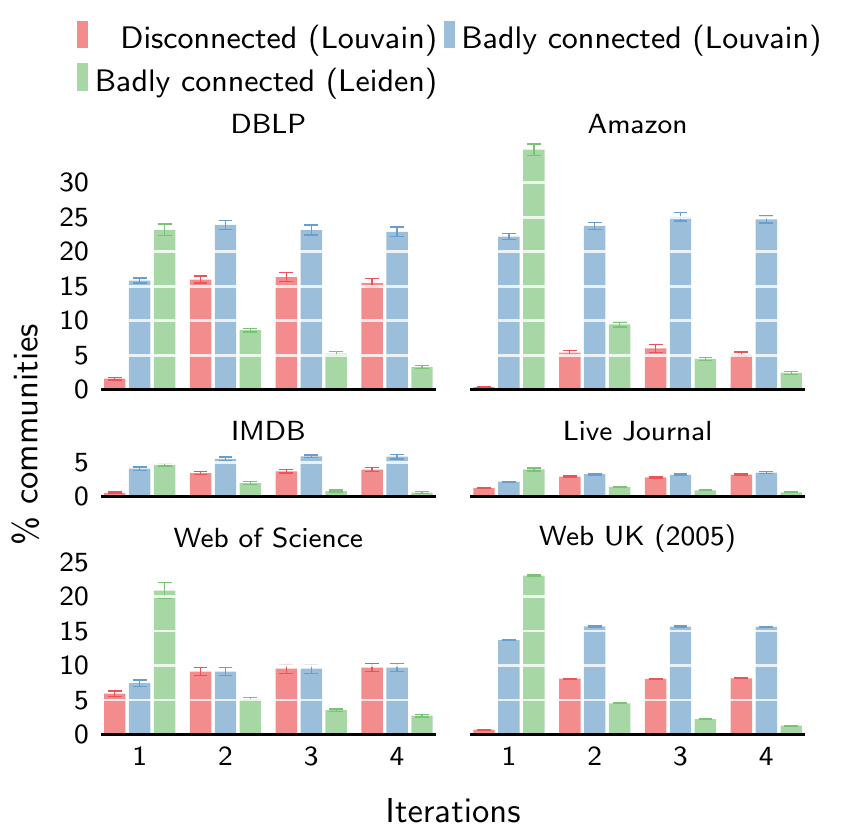}
  \caption{
    \textbf{Badly connected communities}.
    Percentage of communities found by the Louvain algorithm that are either disconnected or badly connected compared to percentage of badly connected communities found by the Leiden algorithm.
	Note that communities found by the Leiden algorithm are guaranteed to be connected.
  }
  \label{fig:subcluster}
\end{figure}

As can be seen in Fig.~\ref{fig:subcluster}, in the first iteration of the Louvain algorithm, the percentage of badly connected communities can be quite high.
For the Amazon, DBLP and Web UK networks, Louvain yields on average respectively $23\%$, $16\%$ and $14\%$ badly connected communities.
The percentage of disconnected communities is more limited, usually around $1\%$.
However, in the case of the Web of Science network, more than $5\%$ of the communities are disconnected in the first iteration.

Later iterations of the Louvain algorithm only aggravate the problem of disconnected communities, even though the quality function (i.e. modularity) increases.
The second iteration of Louvain shows a large increase in the percentage of disconnected communities.
In subsequent iterations, the percentage of disconnected communities remains fairly stable.
The increase in the percentage of disconnected communities is relatively limited for the Live Journal and Web of Science networks.
Other networks show an almost tenfold increase in the percentage of disconnected communities.
The percentage of disconnected communities even jumps to $16\%$ for the DBLP network.
The percentage of badly connected communities is less affected by the number of iterations of the Louvain algorithm.
Presumably, many of the badly connected communities in the first iteration of Louvain become disconnected in the second iteration.
Indeed, the percentage of disconnected communities becomes more comparable to the percentage of badly connected communities in later iterations.
Nonetheless, some networks still show large differences.
For example, after four iterations, the Web UK network has $8\%$ disconnected communities, but twice as many badly connected communities.
Even worse, the Amazon network has $5\%$ disconnected communities, but $25\%$ badly connected communities.

The above results shows that the problem of disconnected and badly connected communities is quite pervasive in practice.
Because the percentage of disconnected communities in the first iteration of the Louvain algorithm usually seems to be relatively low, the problem may have escaped attention from users of the algorithm.
However, focussing only on disconnected communities masks the more fundamental issue: Louvain finds arbitrarily badly connected communities.
The high percentage of badly connected communities attests to this.
Besides being pervasive, the problem is also sizeable.
In the worst case, almost a quarter of the communities are badly connected.
This may have serious consequences for analyses based on the resulting partitions.
For example, nodes in a community in biological or neurological networks are often assumed to share similar functions or behaviour~\cite{Bullmore2009}.
However, if communities are badly connected, this may lead to incorrect attributions of shared functionality.
Similarly, in citation networks, such as the Web of Science network, nodes in a community are usually considered to share a common topic~\cite{Waltman2012,Klavans2017}.
Again, if communities are badly connected, this may lead to incorrect inferences of topics, which will affect bibliometric analyses relying on the inferred topics.
In short, the problem of badly connected communities has important practical consequences.

The Leiden algorithm has been specifically designed to address the problem of badly connected communities.
Fig.~\ref{fig:subcluster} shows how well it does compared to the Louvain algorithm.
The Leiden algorithm guarantees all communities to be connected, but it may yield badly connected communities.
In terms of the percentage of badly connected communities in the first iteration, Leiden performs even worse than Louvain, as can be seen in Fig.~\ref{fig:subcluster}.
Crucially, however, the percentage of badly connected communities decreases with each iteration of the Leiden algorithm.
Starting from the second iteration, Leiden outperformed Louvain in terms of the percentage of badly connected communities.
In fact, if we keep iterating the Leiden algorithm, it will converge to a partition without any badly connected communities, as discussed in Section~\ref{sec:leiden}.
Hence, the Leiden algorithm effectively addresses the problem of badly connected communities.

\begin{figure*}[tb]
  \begin{center}
    \includegraphics{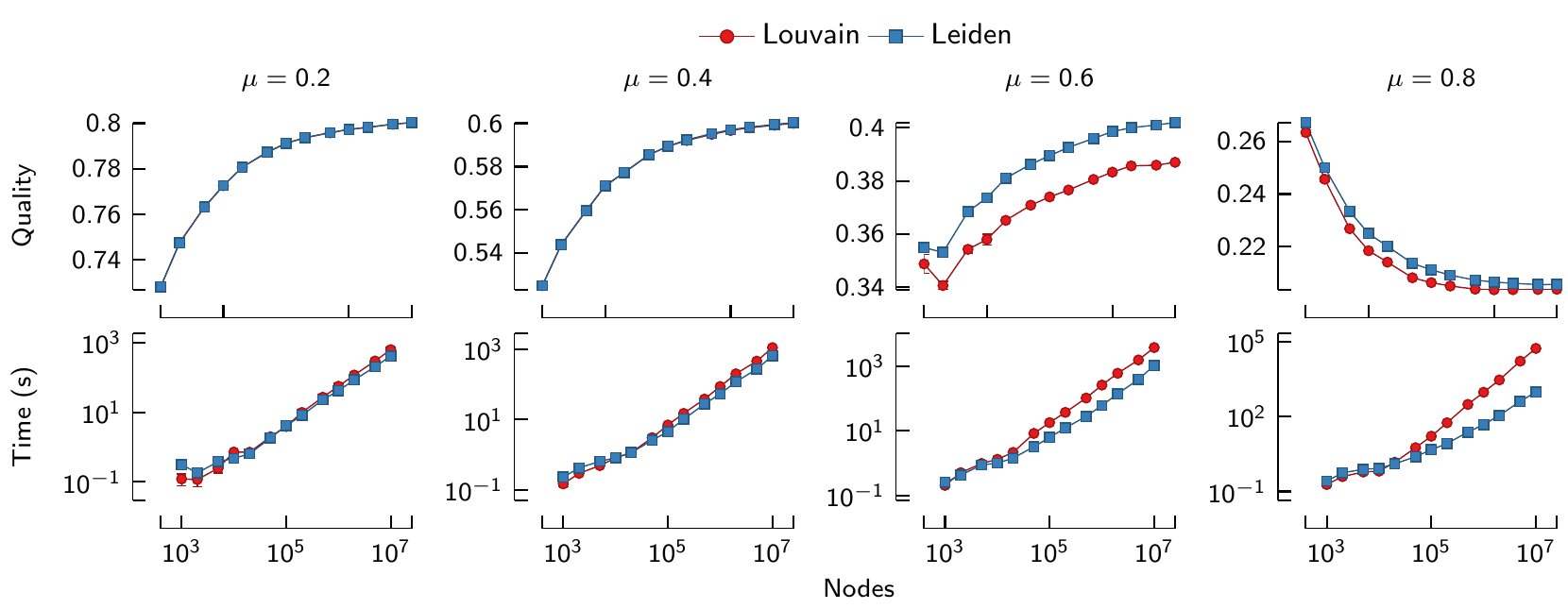}
  \end{center}
  \caption{
    \textbf{Scaling of benchmark results for network size}.
    Speed and quality of the Louvain and the Leiden algorithm for benchmark networks of increasing size (two iterations).
    For larger networks and higher values of $\mu$, Louvain is much slower than Leiden.
	For higher values of $\mu$, Leiden finds better partitions than Louvain.
  }
  \label{fig:benchmark_nodes}
\end{figure*}

\begin{figure*}[tb]
  \begin{center}
    \includegraphics{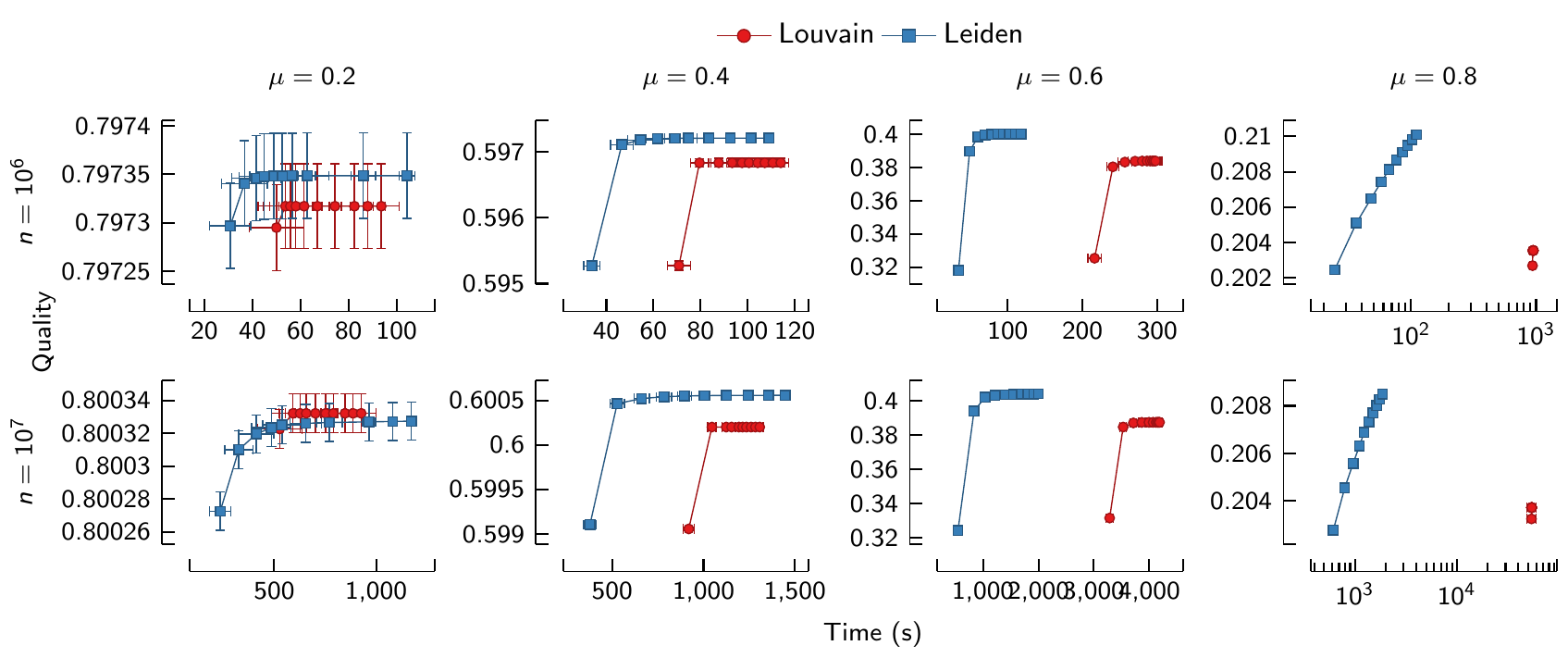}
  \end{center}
  \caption{
    \textbf{Runtime versus quality for benchmark networks}.
    Speed and quality for the first $10$ iterations of the Louvain and the Leiden algorithm for benchmark networks ($n=10^6$ and $n=10^7$).
    The horizontal axis indicates the cumulative time taken to obtain the quality indicated on the vertical axis.
    Each point corresponds to a certain iteration of an algorithm, with results averaged over $10$ experiments.
    In general, Leiden is both faster than Louvain and finds better partitions.
  }
  \label{fig:benchmark_all_itr}
\end{figure*}

\begin{figure}[tb]
  \begin{center}
    \includegraphics{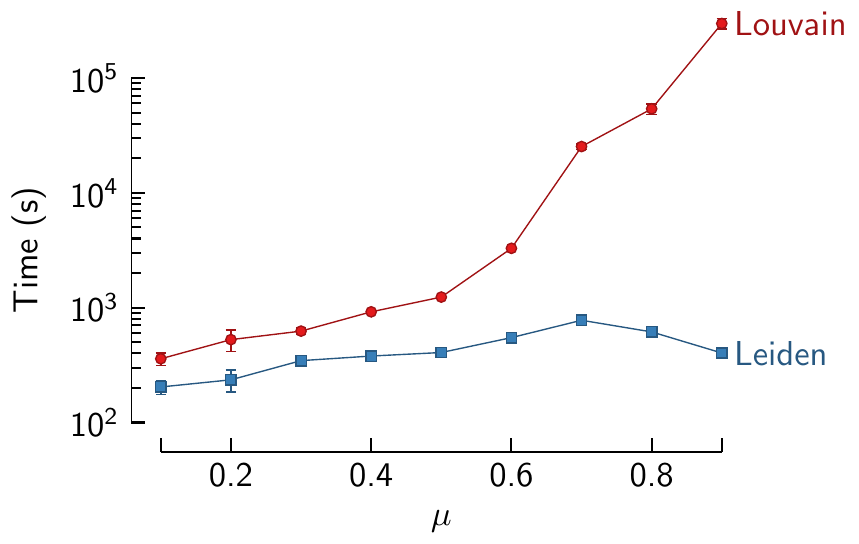}
  \end{center}
  \caption{
    \textbf{Scaling of benchmark results for difficulty of the partition}.
    Speed of the first iteration of the Louvain and the Leiden algorithm for benchmark networks with increasingly difficult partitions ($n=10^7$).
    In the most difficult case ($\mu=0.9$), Louvain requires almost $2.5$ days, while Leiden needs fewer than $10$ minutes.
  }
  \label{fig:benchmark_mixing}
\end{figure}

\subsection{Benchmark networks}

To study the scaling of the Louvain and the Leiden algorithm, we rely on a variant of a well-known approach for constructing benchmark networks~\cite{Lancichinetti2008a}.
We generated benchmark networks in the following way.
First, we created a specified number of nodes and we assigned each node to a community.
Communities were all of equal size.
A community size of $50$ nodes was used for the results presented below, but larger community sizes yielded qualitatively similar results.
We then created a certain number of edges such that a specified average degree $\langle k \rangle$ was obtained.
For the results reported below, the average degree was set to $\langle k \rangle = 10$.
Edges were created in such a way that an edge fell between two communities with a probability $\mu$ and within a community with a probability $1 - \mu$.
We applied the Louvain and the Leiden algorithm to exactly the same networks, using the same seed for the random number generator.
For both algorithms, $10$ iterations were performed.
We used the CPM quality function.
The value of the resolution parameter was determined based on the so-called mixing parameter $\mu$~\cite{Traag2011}.
We generated networks with $n=10^3$ to $n=10^7$ nodes.
For each set of parameters, we repeated the experiment $10$ times.
Below, the quality of a partition is reported as $\frac{\Hf}{2m}$, where $\Hf$ is defined in Eq.~(\ref{eq:CPM_simple}) and $m$ is the number of edges.

As shown in Fig.~\ref{fig:benchmark_nodes}, for lower values of $\mu$ the partition is well defined, and neither the Louvain nor the Leiden algorithm has a problem in determining the correct partition in only two iterations.
Hence, for lower values of $\mu$, the difference in quality is negligible.
However, as $\mu$ increases, the Leiden algorithm starts to outperform the Louvain algorithm.
The differences are not very large, which is probably because both algorithms find partitions for which the quality is close to optimal, related to the issue of the degeneracy of quality functions~\cite{Good2010}.

The Leiden algorithm is clearly faster than the Louvain algorithm.
For lower values of $\mu$, the correct partition is easy to find and Leiden is only about twice as fast as Louvain.
However, for higher values of $\mu$, Leiden becomes orders of magnitude faster than Louvain, reaching $10$--$100$ times faster runtimes for the largest networks.
As can be seen in Fig.~\ref{fig:benchmark_mixing}, whereas Louvain becomes much slower for more difficult partitions, Leiden is much less affected by the difficulty of the partition.

Fig.~\ref{fig:benchmark_all_itr} presents total runtime versus quality for all iterations of the Louvain and the Leiden algorithm.
As can be seen in the figure, Louvain quickly reaches a state in which it is unable to find better partitions.
On the other hand, Leiden keeps finding better partitions, especially for higher values of $\mu$, for which it is more difficult to identify good partitions.
A number of iterations of the Leiden algorithm can be performed before the Louvain algorithm has finished its first iteration.
Later iterations of the Louvain algorithm are very fast, but this is only because the partition remains the same.
With one exception ($\mu=0.2$ and $n=10^7$), all results in Fig.~\ref{fig:benchmark_all_itr} show that Leiden outperforms Louvain in terms of both computational time and quality of the partitions.

\subsection{Empirical networks}

Analyses based on benchmark networks have only a limited value because these networks are not representative of empirical real-world networks.
In particular, benchmark networks have a rather simple structure.
Empirical networks show a much richer and more complex structure.
We now compare how the Leiden and the Louvain algorithm perform for the six empirical networks listed in Table~\ref{tbl:real_networks}.
Our analysis is based on modularity with resolution parameter $\gamma = 1$.
For each network, Table~\ref{tbl:real_networks} reports the maximal modularity obtained using the Louvain and the Leiden algorithm.

\begin{figure}[tb]
  \begin{center}
    \includegraphics{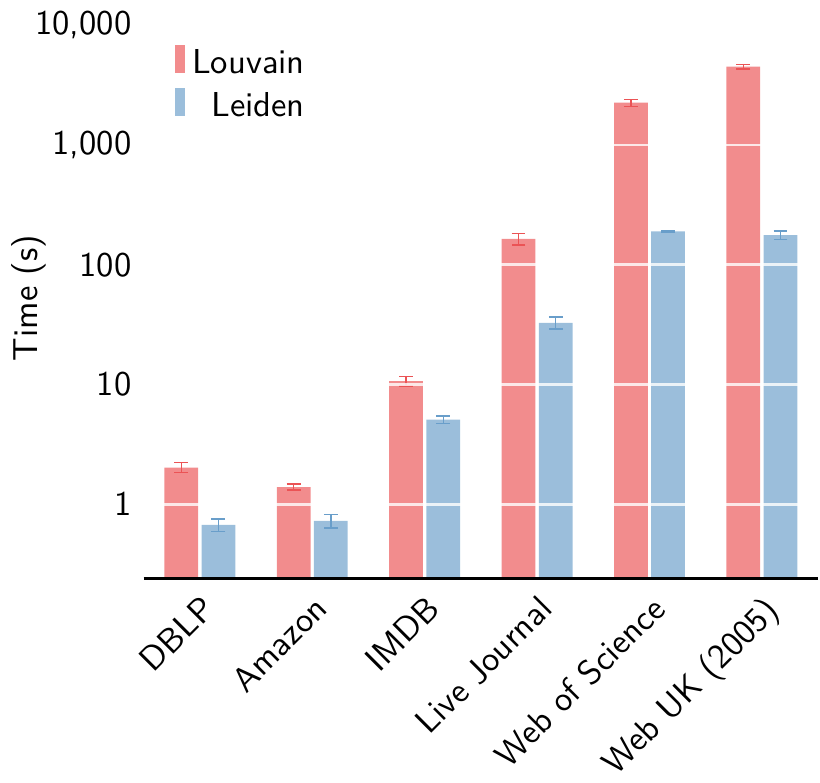}
  \end{center}
  \caption{
    \textbf{First iteration runtime for empirical networks}.
    Speed of the first iteration of the Louvain and the Leiden algorithm for six empirical networks.
    Leiden is faster than Louvain especially for larger networks.
  }
  \label{fig:real_network_1st_itr_speed}
\end{figure}

As can be seen in Fig.~\ref{fig:real_network_1st_itr_speed}, the Leiden algorithm is significantly faster than the Louvain algorithm also in empirical networks.
In the first iteration, Leiden is roughly $2$--$20$ times faster than Louvain.
The speed difference is especially large for larger networks.
This is similar to what we have seen for benchmark networks.
For the Amazon and IMDB networks, the first iteration of the Leiden algorithm is only about $1.6$ times faster than the first iteration of the Louvain algorithm.
However, Leiden is more than $7$ times faster for the Live Journal network, more than $11$ times faster for the Web of Science network and more than $20$ times faster for the Web UK network.
In fact, for the Web of Science and Web UK networks, Fig.~\ref{fig:real_networks_all_itr} shows that more than $10$ iterations of the Leiden algorithm can be performed before the Louvain algorithm has finished its first iteration.

As shown in Fig.~\ref{fig:real_networks_all_itr}, the Leiden algorithm also performs better than the Louvain algorithm in terms of the quality of the partitions that are obtained.
For all networks, Leiden identifies substantially better partitions than Louvain.
Louvain quickly converges to a partition and is then unable to make further improvements.
In contrast, Leiden keeps finding better partitions in each iteration.

The quality improvement realised by the Leiden algorithm relative to the Louvain algorithm is larger for empirical networks than for benchmark networks.
Hence, the complex structure of empirical networks creates an even stronger need for the use of the Leiden algorithm.
Leiden keeps finding better partitions for empirical networks also after the first $10$ iterations of the algorithm.
This contrasts to benchmark networks, for which Leiden often converges after a few iterations.
For empirical networks, it may take quite some time before the Leiden algorithm reaches its first stable iteration.
As can be seen in Fig.~\ref{fig:real_networks_n_itr}, for the IMDB and Amazon networks, Leiden reaches a stable iteration relatively quickly, presumably because these networks have a fairly simple community structure.
The DBLP network is somewhat more challenging, requiring almost $80$ iterations on average to reach a stable iteration.
The Web of Science network is the most difficult one.
For this network, Leiden requires over $750$ iterations on average to reach a stable iteration.
Importantly, the first iteration of the Leiden algorithm is the most computationally intensive one, and subsequent iterations are faster.
For example, for the Web of Science network, the first iteration takes about $110$--$120$ seconds, while subsequent iterations require about $40$ seconds.

\begin{figure}[tb]
  \begin{center}
    \includegraphics{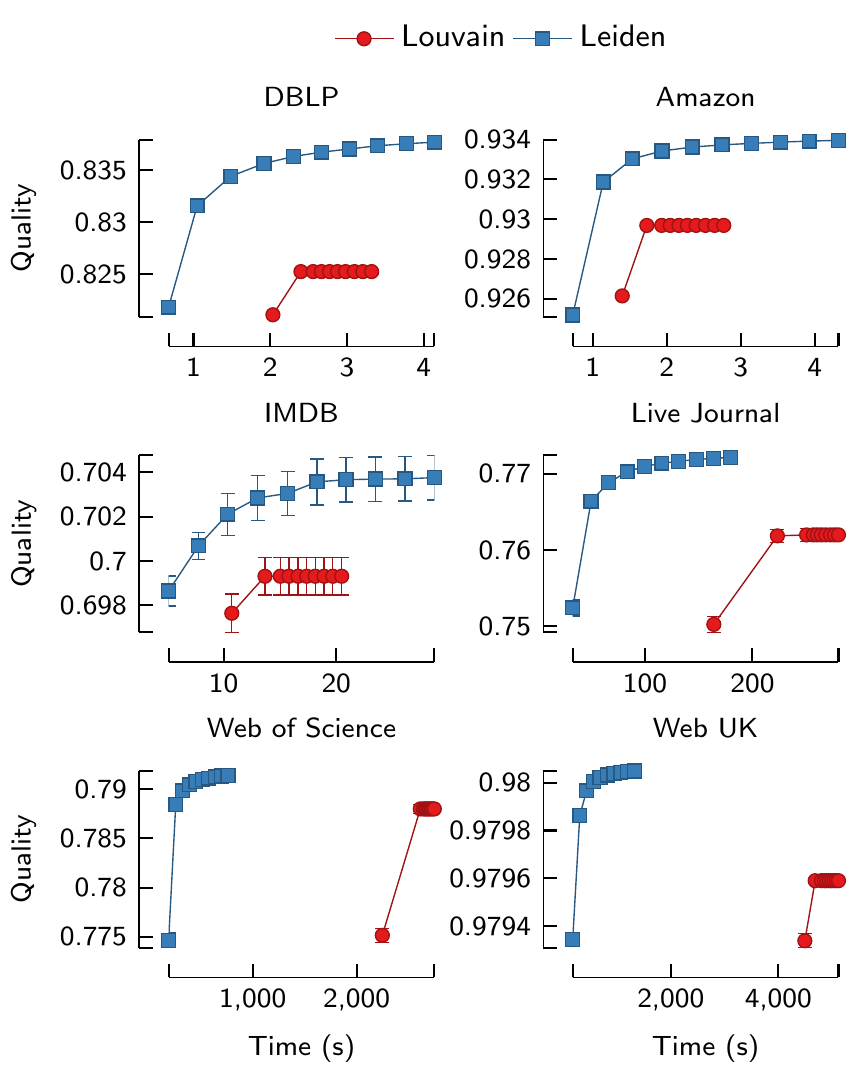}
  \end{center}
  \caption{
    \textbf{Runtime versus quality for empirical networks}.
    Speed and quality for the first $10$ iterations of the Louvain and the Leiden algorithm for six empirical networks.
    The horizontal axis indicates the cumulative time taken to obtain the quality indicated on the vertical axis.
    Each point corresponds to a certain iteration of an algorithm, with results averaged over $10$ experiments.
    Leiden is both faster than Louvain and finds better partitions.
  }
  \label{fig:real_networks_all_itr}
\end{figure}

\begin{figure}[tb]
  \begin{center}
    \includegraphics[width=\columnwidth]{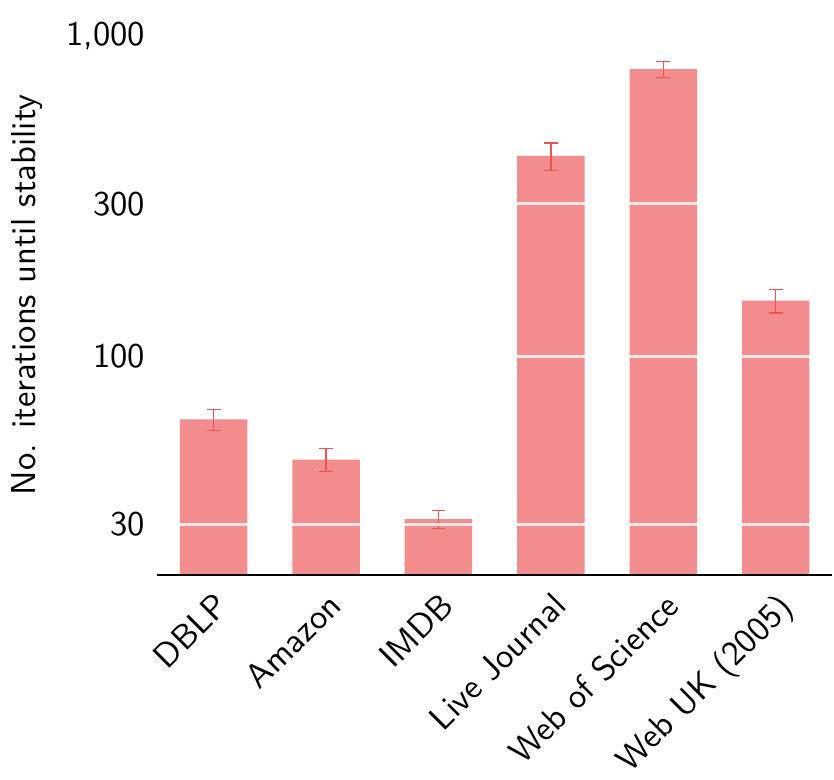}%
  \end{center}
  \caption{
    \textbf{Number of iterations until stability}.
    Number of iterations before the Leiden algorithm has reached a stable iteration for six empirical networks.
    In a stable iteration, the partition is guaranteed to be node optimal and subpartition $\gamma$-dense.
  }
  \label{fig:real_networks_n_itr}
\end{figure}

\section{Discussion}

Community detection is an important task in the analysis of complex networks.
Finding communities in large networks is far from trivial: algorithms need to be fast, but they also need to provide high-quality results.
One of the most widely used algorithms is the Louvain algorithm~\cite{Blondel2008}, which is reported to be among the fastest and best performing community detection algorithms~\cite{Lancichinetti2009,Yang2016}.
However, as shown in this paper, the Louvain algorithm has a major shortcoming: the algorithm yields communities that may be arbitrarily badly connected.
Communities may even be disconnected.

To overcome the problem of arbitrarily badly connected communities, we introduced a new algorithm, which we refer to as the Leiden algorithm.
This algorithm provides a number of explicit guarantees.
In particular, it yields communities that are guaranteed to be connected.
Moreover, when the algorithm is applied iteratively, it converges to a partition in which all subsets of all communities are guaranteed to be locally optimally assigned.
In practical applications, the Leiden algorithm convincingly outperforms the Louvain algorithm, both in terms of speed and in terms of quality of the results, as shown by the experimental analysis presented in this paper.
We conclude that the Leiden algorithm is strongly preferable to the Louvain algorithm.

\bibliography{bibliography}

\begin{acknowledgments}
\noindent
We gratefully acknowledge computational facilities provided by the LIACS Data Science Lab Computing Facilities through Frank Takes. We thank Lovro \u{S}ubelj for his comments on an earlier version of this paper.
\end{acknowledgments}

\section*{Author contributions statement}
\noindent
All authors conceived the algorithm and contributed to the source code.
VAT performed the experimental analysis.
VAT and LW wrote the manuscript.
NJvE reviewed the manuscript.
\section*{Additional information}

\noindent
\textbf{Competing interests}

\noindent
The authors act as bibliometric consultants to CWTS B.V., which makes use of community detection algorithms in commercial products and services.

\clearpage

\onecolumngrid
\appendix
\counterwithin{figure}{section}
\counterwithin{algorithm}{section}

\section{Pseudo-code and mathematical notation}
\label{sec:code_notation}

Pseudo-code for the Louvain algorithm and the Leiden algorithm is provided in Algorithms~\ref{algo:louvain} and~\ref{algo:leiden}, respectively.
Below we discuss the mathematical notation that is used in the pseudo-code and also in the mathematical results presented in Appendices~\ref{sec:reachability}, \ref{sec:guarantees}, and~\ref{sec:bound_on_optimality}.
There are some uncommon elements in the notation.
In particular, the idea of sets of sets plays an important role, and some concepts related to this idea need to be introduced.

Let $G = (V, E)$ be a graph with $n = |V|$ nodes and $m = |E|$ edges.
Graphs are assumed to be undirected.
With the exception of Theorem~\ref{thm:bound_external_edges} in Appendix~\ref{sec:bound_on_optimality}, the mathematical results presented in this paper apply to both unweighted and weighted graphs.
For simplicity, our mathematical notation assumes graphs to be unweighted, although the notation does allow for multigraphs.
A partition $\P = \{C_1, \ldots, C_r\}$ consists of $r = |\P|$ communities, where each community $C_i \subseteq V$ consists of a set of nodes such that $V = \bigcup_i C_i$ and $C_i \cap C_j = \emptyset$ for all $i \neq j$.
For two sets $R$ and $S$, we sometimes use $R + S$ to denote the union $R \cup S$ and $R - S$ to denote the difference $R \setminus S$.

A quality function $\Hf(G, \P)$ assigns a ``quality'' to a partition $\P$ of a graph $G$.
We aim to find a partition with the highest possible quality.
The graph $G$ is often clear from the context, and we therefore usually write $\Hf(\P)$ instead of $\Hf(G, \P)$.
Based on partition $\P$, graph $G$ can be \emph{aggregated} into a new graph $G'$.
Graph $G$ is then called the \emph{base graph}, while graph $G'$ is called the \emph{aggregate graph}.
The nodes of the aggregate graph $G'$ are the communities in the partition $\P$ of the base graph $G$, i.e. $V(G') = \P$.
The edges of the aggregate graph $G'$ are multi-edges.
The number of edges between two nodes in the aggregate graph $G'$ equals the number of edges between nodes in the two corresponding communities in the base graph $G$.
Hence, $E(G') = \{ (C, D) \mid (u, v) \in E(G), u \in C \in \P, v \in D \in \P \}$, where $E(G')$ is a multiset.
A quality function must have the property that $\Hf(G, \P) = \Hf(G', \P')$, where $\P' = \{\{v\} \mid v \in V(G')\}$ denotes the singleton partition of the aggregate graph $G'$.
This ensures that a quality function gives consistent results for base graphs and aggregate graphs.

We denote by $\P(v \mapsto C)$ the partition that is obtained when we start from partition $\P$ and we then move node $v$ to community $C$.
We write $\Delta \Hf_\P(v \mapsto C)$ for the change in the quality function by moving node $v$ to community $C$ for some partition $\P$.
In other words, $\Delta \Hf_\P(v \mapsto C) = \Hf(\P(v \mapsto C)) - \Hf(\P)$.
We usually leave the partition $\P$ implicit and simply write $\Delta \Hf(v \mapsto C)$.
Similarly, we denote by $\Delta \Hf_\P(S \mapsto C)$ the change in the quality function by moving a set of nodes $S$ to community $C$.
An empty community is denoted by $\emptyset$.
Hence, $\Delta \Hf_\P(S \mapsto \emptyset)$ is the change in the quality function by moving a set of nodes $S$ to an empty (i.e. new) community.

Now consider a community $C$ that consists of two parts $S_1$ and $S_2$ such that $C = S_1 \cup S_2$ and $S_1 \cap S_2 = \emptyset$.
Suppose that $S_1$ and $S_2$ are disconnected.
In other words, there are no edges between nodes in $S_1$ and $S_2$.
We then require a quality function to have the property that $\Delta \Hf(S_1 \mapsto \emptyset) > 0$ and $\Delta \Hf(S_2 \mapsto \emptyset) > 0$.
This guarantees that a partition can always be improved by splitting a community into its connected components.
This comes naturally for most definitions of a community, but this is not the case when considering for example negative links~\cite{Traag2009}.

Because nodes in an aggregate graph are sets themselves, it is convenient to define some recursive properties.
\begin{definition}
  The \emph{recursive size} of a set $S$ is defined as
  \begin{equation}
    \|S\| = \sum_{s \in S} \|s\|,
    \label{eq:set_size}
  \end{equation}
  where $\|s\| = 1$ if $s$ is not a set itself.
  The \emph{flattening} operation for a set $S$ is defined as
  \begin{equation}
    \flatf(S) = \bigcup_{s \in S} \flatf(s),
  \end{equation}
  where $\flatf(s) = s$ if $s$ is not a set itself.
  A set that has been flattened is called a \emph{flat set}.
\end{definition}
The recursive size of a set corresponds to the usual definition of set size in case the elements of a set are not sets themselves, but it generalizes this definition whenever the elements are sets themselves.
For example, if $S = \{ \{a, b\}, \{c\}, \{d, e, f\} \}$, then
\begin{align*}
  \|S\| &= \|\{a, b\}\| + \|\{c\}\| + \|\{d, e, f\}\| \\
    &= (\|a\| + \|b\|) + \|c\| + (\|d\| + \|e\| + \|f\|) \\
    &= 2 + 1 + 3 = 6.
\end{align*}
This contrasts with the traditional size of a set, which is $|S| = 3$, because $S$ contains $3$ elements.
The fact that the elements are sets themselves plays no role in the traditional size of a set.
The flattening of $S$ is
\begin{align*}
  \flatf(S) &= \flatf(\{a, b\}) \cup \flatf(\{c\}) \cup \flatf(\{d, e, f\}) \\
    &= a \cup b \cup c \cup d \cup e \cup f \\
    &= \{a, b, c, d, e, f\}.
\end{align*}
Note that $\|S\| = |\flatf(S)|$.
\begin{definition}
  The \emph{flattening} operation for a partition $\P$ is defined as
  \begin{equation}
    \flatf^*(\P) = \{ \flatf(C) \mid C \in \P \}.
  \end{equation}
  Hence, $\flatf^*(\P)$ denotes the operation in which each community $C \in \P$ is flattened.
  A partition that has been flattened is called a \emph{flat partition}.
\end{definition}
For any partition of an aggregate graph, the equivalent partition of the base graph can be obtained by applying the flattening operation.

Additionally, we need some terminology to describe the connectivity of communities.
\begin{definition}
  Let $G = (V, E)$ be a graph, and let $\P$ be a partition of $G$.
  Furthermore, let $H(C)$ be the subgraph induced by a community $C \in \P$, i.e. $V(H) = C$ and $E(H) = \{(u, v) \mid (u, v) \in E(G), u, v \in C \}$.
  A community $C \in \P$ is called \emph{connected} if $H(C)$ is a connected graph.
  Conversely, a community $C \in \P$ is called \emph{disconnected} if $H(C)$ is a disconnected graph.
\end{definition}

The mathematical proofs presented in this paper rely on the Constant Potts Model (CPM)~\cite{Traag2011}.
This quality function has important advantages over modularity.
In particular, unlike modularity, CPM does not suffer from the problem of the resolution limit~\cite{Fortunato:2007p183,Traag2011}.
Moreover, our mathematical definitions and proofs are quite elegant when expressed in terms of CPM.
The CPM quality function is defined as
\begin{equation}
  \Hf(G, \P) = \sum_{C \in \P} \left[E(C, C) - \gamma \binom{\|C\|}{2}\right],
  \label{eq:CPM}
\end{equation}
where $E(C, D) = |\{(u, v) \in E(G) \mid u \in C, v \in D\}|$ denotes the number of edges between nodes in communities $C$ and $D$.
Note that this definition can also be used for aggregate graphs because $E(G)$ is a multiset.

The mathematical results presented in this paper also extend to modularity, although the formulations are less elegant.
Results for modularity are straightforward to prove by redefining the recursive size $\|S\|$ of a set $S$.
We need to define the size of a node $v$ in the base graph as $\|v\| = k_v$ instead of $\|v\| = 1$, where $k_v$ is the degree of node $v$.
Furthermore, we need to rescale the resolution parameter $\gamma$ by $2m$.
Modularity can then be written as
\begin{equation}
  \Hf(G, \P) = \sum_{C \in \P} \left[E(C, C) - \frac{\gamma}{2m} \binom{\|C\|}{2}\right].
  \label{eq:modularity}
\end{equation}
Note that, in addition to the overall multiplicative factor of $\frac{1}{2m}$, this adds a constant $\frac{\gamma}{2m} \sum_C \frac{\|C\|}{2} = \frac{\gamma}{2}$ to the ordinary definition of modularity~\cite{Newman2004Finding}.
However, this does not matter for optimisation or for the proofs.

As discussed in the main text, the Louvain and the Leiden algorithm can be \emph{iterated} by performing multiple consecutive iterations of the algorithm, using the partition identified in one iteration as starting point for the next iteration.
In this way, a sequence of partitions $\P_0, \P_1, \ldots$ is obtained such that $\P_{t + 1} = \textsc{Louvain}(G, \P_t)$ or $\P_{t + 1} = \textsc{Leiden}(G, \P_t)$.
The initial partition $\P_0$ usually is the singleton partition of the graph $G$, i.e. $\P_0 = \{\{v\} \mid v \in V\}$.

\begin{algorithm*}[bt]
  \begin{algorithmic}[1]
    \Function{Louvain}{Graph $G$, Partition $\P$}
      \Do
        \State $\P \gets \Call{MoveNodes}{G, \P}$ \Comment{Move nodes between communities}
        \State done $ \gets |\P| = |V(G)|$ \Comment{Terminate when each community consists of only one node}
        \If{not done}
          \State $G \gets \Call{AggregateGraph}{G, \P}$ \Comment{Create aggregate graph based on partition $\P$}
          \State $\P \gets \Call{SingletonPartition}{G}$ \Comment{Assign each node in aggregate graph to its own community}
        \EndIf
      \doWhile{not done}
      \State \Return $\flatf^*(\P)$
    \EndFunction
    \item[]
    \Function{MoveNodes}{Graph $G$, Partition $\P$}
      \Do
        \State $\Hf_\text{old} = \Hf(\P)$
        \For{$v \in V(G)$} \Comment{Visit nodes (in random order)}
          \State $C' \gets \argmax_{C \in \P \cup \emptyset} \Delta\Hf_\P(v \mapsto C)$ \Comment{Determine best community for node $v$}
          \If{$\Delta\Hf_\P(v \mapsto C') > 0$} \Comment{Perform only strictly positive node movements} \label{algo:louvain:strict_increase_move}
            \State $v \mapsto C'$ \Comment{Move node $v$ to community $C'$}
          \EndIf
        \EndFor
      \doWhile{$\Hf(\P) > \Hf_\text{old}$} \Comment{Continue until no more nodes can be moved}
      \State \Return $\P$
    \EndFunction
    \item[]
    \Function{AggregateGraph}{Graph $G$, Partition $\P$}
      \State $V \gets \P$ \Comment{Communities become nodes in aggregate graph}
      \State $E \gets \{(C, D) \mid (u, v) \in E(G), u \in C \in \P, v \in D \in \P\}$ \Comment{$E$ is a multiset}
      \State \Return $\Call{Graph}{V, E}$
    \EndFunction
    \item[]
    \Function{SingletonPartition}{Graph $G$}
      \State \Return $\{ \{v\} \mid v \in V(G) \}$ \Comment{Assign each node to its own community}
    \EndFunction
  \end{algorithmic}
  \caption{\textbf{Louvain algorithm.}}
  \label{algo:louvain}
\end{algorithm*}

\begin{algorithm*}[bt]
  \begin{algorithmic}[1]
    \Function{Leiden}{Graph $G$, Partition $\P$}
      \Do
        \State $\P \gets \Call{MoveNodesFast}{G, \P}$ \Comment{Move nodes between communities} \label{algo:leiden:call_move_nodes}
        \State done $ \gets |\P| = |V(G)|$ \Comment{Terminate when each community consists of only one node} \label{algo:leiden:stop_criterion}
        \If{not done}
          \State $\P_\text{refined} \gets \Call{RefinePartition}{G, \P}$ \Comment{Refine partition $\P$}
          \State $G \gets \Call{AggregateGraph}{G, \P_\text{refined}}$ \Comment{Create aggregate graph based on refined partition $\P_\text{refined}$}
          \State $\P \gets \{\{v \mid v \subseteq C, v \in V(G)\} \mid C \in \P\}$ \Comment{But maintain partition $\P$}
        \EndIf
      \doWhile{not done}
      \State \Return $\flatf^*(\P)$
    \EndFunction
    \item[]
    \Function{MoveNodesFast}{Graph $G$, Partition $\P$}
      \State $Q \gets \Call{Queue}{V(G)}$ \Comment{Make sure that all nodes will be visited (in random order)}
	  \Do
        \State $v \gets Q$.remove() \Comment{Determine next node to visit}
        \State $C' \gets \argmax_{C \in \P \cup \emptyset} \Delta\Hf_\P(v \mapsto C)$ \Comment{Determine best community for node $v$}
        \If{$\Delta\Hf_\P(v \mapsto C') > 0$} \Comment{Perform only strictly positive node movements} \label{algo:leiden:strict_increase_move}
          \State $v \mapsto C'$ \Comment{Move node $v$ to community $C'$}
          \State $N \gets \{u \mid (u, v) \in E(G), u \notin C'\}$ \Comment{Identify neighbours of node $v$ that are not in community $C'$}
          \State $Q$.add($N - Q$) \Comment{Make sure that these neighbours will be visited} \label{algo:leiden:add_neighs_to_queue}
        \EndIf
      \doWhile{$Q \neq \emptyset$} \Comment{Continue until there are no more nodes to visit}
      \State \Return $\P$
    \EndFunction
    \item[]
    \Function{RefinePartition}{Graph $G$, Partition $\P$}
      \State $\P_\text{refined} \gets \Call{SingletonPartition}{G}$ \Comment{Assign each node to its own community}
      \For{$C \in \P$} \Comment{Visit communities}
        \State $\P_\text{refined} \gets \Call{MergeNodesSubset}{G, \P_\text{refined}, C}$ \Comment{Refine community $C$}
      \EndFor
      \State \Return $\P_\text{refined}$
    \EndFunction
    \item[]
    \Function{MergeNodesSubset}{Graph $G$, Partition $\P$, Subset $S$}
	  \State $R = \{v \mid v \in S, E(v, S - v) \geq \gamma \|v\| \cdot (\|S\| - \|v\|)\}$ \Comment{Consider only nodes that are well connected within subset $S$} \label{algo:leiden:strict_merge1}
      \For{$v \in R$} \Comment{Visit nodes (in random order)}
        \If{$v$ in singleton community} \Comment{Consider only nodes that have not yet been merged}
          \State $\T \gets \{C \mid C \in \P, C \subseteq S, E(C, S - C) \geq \gamma \|C\| \cdot (\|S\| - \|C\|)\}$ \Comment{Consider only well-connected communities} \label{algo:leiden:strict_merge2}
          \State $\Pr(C' = C) \sim
            \begin{cases}
              \exp\left( \frac{1}{\theta} \Delta \Hf_\P(v \mapsto C) \right) & \text{if~} \Delta \Hf_\P(v \mapsto C) \geq 0 \\
              0 & \text{otherwise}
            \end{cases}$ \quad for $C \in \T$ \Comment{Choose random community $C'$} \label{algo:leiden:merge_prob}
          \State $v \mapsto C'$ \Comment{Move node $v$ to community $C'$}
        \EndIf
      \EndFor
      \State \Return $\P$
    \EndFunction
    \item[]
    \Function{AggregateGraph}{Graph $G$, Partition $\P$}
      \State $V \gets \P$ \Comment{Communities become nodes in aggregate graph}
      \State $E \gets \{(C, D) \mid (u, v) \in E(G), u \in C \in \P, v \in D \in \P\}$ \Comment{$E$ is a multiset}
      \State \Return $\Call{Graph}{V, E}$
    \EndFunction
    \item[]
    \Function{SingletonPartition}{Graph $G$}
      \State \Return $\{ \{v\} \mid v \in V(G) \}$ \Comment{Assign each node to its own community}
    \EndFunction
  \end{algorithmic}
  \caption{\textbf{Leiden algorithm.}}
  \label{algo:leiden}
\end{algorithm*}

\section{Disconnected communities in the Louvain algorithm}
\label{sec:disconnected_example}

In this appendix, we analyse the problem that communities obtained using the Louvain algorithm may be disconnected.
This problem is also discussed in the main text (Section~\ref{sec:disconnected}), using the example presented in Fig.~\ref{fig:disconnected_community}.
However, the main text offers no numerical details.
These details are provided below.

We consider the CPM quality function with a resolution of $\gamma = \frac{1}{7}$.
In the example presented in Fig.~\ref{fig:disconnected_community}, the edges between nodes 0 and 1 and between nodes 0 and 4 have a weight of $2$, as indicated by the thick lines in the figure.
All other edges have a weight of $1$.
The Louvain algorithm starts from a singleton partition, with each node being assigned to its own community.
The algorithm then keeps iterating over all nodes, moving each node to its optimal community.
Depending on the order in which the nodes are visited, the following could happen.
Node 1 is visited first, followed by node 4.
Nodes 1 and 4 join the community of node 0, because the weight of the edges between nodes 0 and 1 and between nodes 0 and 4 is sufficiently high.
For node 1, the best move clearly is to join the community of node 0.
For node 4, the benefit of joining the community of nodes 0 and 1 then is $2 - \gamma \cdot 2 = \frac{12}{7}$.
This is larger than the benefit of joining the community of node 5 or 6, which is $1 - \gamma \cdot 1 = \frac{6}{7}$.
Next, nodes 2, 3, 5 and 6 are visited.
For these nodes, it is beneficial to join the community of nodes 0, 1 and 4, because joining this community has a benefit of at least $1 - \gamma \cdot 6 = \frac{1}{7} > 0$.
This then yields the situation portrayed in Fig.~\ref{fig:disconnected_community}(a).
After some node movements in the rest of the graph, some neighbours of node 0 in the rest of the graph end up together in a new community.
Consequently, when node 0 is visited, it can best be moved to this new community, which gives the situation depicted in Fig.~\ref{fig:disconnected_community}(b).
In particular, suppose there are $5$ nodes in the new community, all of which are connected to node 0.
In that case, the benefit for node 0 of moving to this community is $5 - \gamma \cdot 5 = \frac{30}{7}$, while the benefit of staying in the current community is only $2 \cdot 2 - \gamma \cdot 6 = \frac{22}{7}$.
After node 0 has moved, nodes 1 and 4 are still locally optimally assigned.
For these nodes, the benefit of moving to the new community of node 0 is $2 - \gamma \cdot 6 = \frac{8}{7}$.
This is smaller than the benefit of staying in the current community, which is $2 - \gamma \cdot 5 = \frac{9}{7}$.
Finally, nodes 2, 3, 5 and 6 are all locally optimally assigned, as $1 - \gamma \cdot 5 = \frac{2}{7} > 0$.
Hence, we end up with a community that is disconnected.
In later stages of the Louvain algorithm, there will be no possibility to repair this.

The example presented above considers a weighted graph, but this graph can be assumed to be an aggregate graph of an unweighted base graph, thus extending the example also to unweighted graphs.
Although the example uses the CPM quality function, similar examples can be given for modularity.
However, because of the dependency of modularity on the number of edges $m$, the calculations for modularity are a bit more complex.
Importantly, both for CPM and for modularity, the Louvain algorithm suffers from the problem of disconnected communities.

\section{Reachability of optimal partitions}
\label{sec:reachability}

In this appendix, we consider two types of move sequences: non-decreasing move sequences and greedy move sequences.
For each type of move sequence, we study whether all optimal partitions are reachable.
We first show that this is not the case for greedy move sequences.
In particular, we show that for some optimal partitions there does not exist a greedy move sequence that is able to reach the partition.
We then show that optimal partitions can always be reached using a non-decreasing move sequence.
This result forms the basis for the asymptotic guarantees of the Leiden algorithm, which are discussed in Appendix~\ref{sec:asymptotic}.

We first define the different types of move sequences.
\begin{definition}
  Let $G = (V, E)$ be a graph, and let $\P_0, \ldots, \P_\tau$ be partitions of $G$.
  A sequence of partitions $\P_0, \ldots, \P_\tau$ is called a \emph{move sequence} if for each $t = 0, \ldots, \tau - 1$ there exists a node $v_t \in V$ and a community $C_t \in \P_t \cup \emptyset$ such that $\P_{t + 1} = \P_t(v_t \mapsto C_t)$.
  A move sequence is called \emph{non-decreasing} if $\Hf(\P_{t + 1}) \geq \Hf(\P_t)$ for all $t = 0, \ldots, \tau - 1$.
  A move sequence is called \emph{greedy} if $\Hf(\P_{t + 1}) = \max_C \Hf(\P_t(v_t \mapsto C))$ for all $t = 0, \ldots, \tau - 1$.
\end{definition}
In other words, the next partition in a move sequence is obtained by moving a single node to a different community.
Clearly, a greedy move sequence must be non-decreasing, but a non-decreasing move sequence does not need to be greedy.
A natural question is whether for any optimal partition $\P^*$ there exists a move sequence that starts from the singleton partition and that reaches the optimal partition, i.e., a move sequence $\P_0, \ldots, \P_\tau$ with $\P_0 = \{\{v\} \mid v \in V\}$ and $\P_\tau = \P^*$.
Trivially, it is always possible to reach the optimal partition if we allow all moves---even moves that decrease the quality function---as is done for example in simulated annealing~\cite{Guimera2005Functional,Reichardt2006Statistical}.
However, it can be shown that there is no need to consider all moves in order to reach the optimal partition.
It is sufficient to consider only non-decreasing moves.
On the other hand, considering only greedy moves turns out to be too restrictive to guarantee that the optimal partition can be reached.

\subsection{Non-decreasing move sequences}
\label{sec:nondecreasing_move_sequences}

We here prove that for any graph there exists a non-decreasing move sequence that reaches the optimal partition $\P^*$. The optimal partition can be reached in $n - |\P^*|$ steps.
\begin{theorem}
  Let $G = (V, E)$ be a graph, and let $\P^*$ be an optimal partition of $G$.
  There then exists a non-decreasing move sequence $\P_0, \ldots, \P_\tau$ with $\P_0 = \{\{v\} \mid v \in V\}$, $\P_\tau = \P^*$, and $\tau = n - |\P^*|$.
  \label{thm:optimal_seq}
\end{theorem}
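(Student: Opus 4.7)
The plan is to build $\P^*$ from the singleton partition by assembling the communities of $\P^*$ one at a time. The key observation is that $\Delta\Hf(v \mapsto S)$ for a node $v$ currently in a singleton community and a target community $S$ depends only on edges incident to $v$ and on the recursive sizes of $v$ and $S$; it is independent of how the rest of the nodes are partitioned. Hence, if we process the communities $C \in \P^*$ one at a time while leaving all other nodes in their singletons, the moves made inside one community do not interact with the others. It therefore suffices to show that each community $C \in \P^*$ with $k = |C|$ nodes can be assembled from its $k$ singletons using exactly $k-1$ moves, each non-decreasing; summing over communities gives $\sum_{C \in \P^*}(|C|-1) = n - |\P^*|$ moves.

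The crucial lemma I would isolate is: for every proper non-empty subset $S \subsetneq C$, there exists at least one $v \in C \setminus S$ with $\Delta\Hf(v \mapsto S) \geq 0$. The proof is by contradiction using CPM. Moving a singleton $v$ into $S$ gives
\begin{equation*}
  \Delta\Hf(v \mapsto S) = E(v, S) - \gamma \|v\|\,\|S\|.
\end{equation*}
If this were strictly negative for every $v \in C\setminus S$, summing over all such $v$ would yield
\begin{equation*}
  E(C\setminus S,\, S) \;<\; \gamma\,\|S\|\,\|C\setminus S\|.
\end{equation*}
But a direct calculation shows that splitting the community $C$ into the two communities $S$ and $C\setminus S$ changes the quality by $-E(S, C\setminus S) + \gamma\,\|S\|\,\|C\setminus S\|$, which by the inequality above would be strictly positive, contradicting the optimality of $\P^*$.

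Given the lemma, I would construct the move sequence for a single community $C$ inductively: start from any singleton $\{v_0\}\subseteq C$ inside an otherwise-singleton partition, and at step $i$ use the lemma to pick some $v_{i}\in C\setminus\{v_0,\ldots,v_{i-1}\}$ with $\Delta\Hf(v_i \mapsto \{v_0,\ldots,v_{i-1}\}) \geq 0$, then perform that move. After $k-1$ steps we have built $C$ with non-decreasing quality. Concatenating these blocks across all communities of $\P^*$ yields the required sequence $\P_0,\ldots,\P_\tau$ with $\tau = n - |\P^*|$.

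The main obstacle is really just the lemma, and it is not so much technically hard as it is the one place where optimality of $\P^*$ is used in an essential way. Everything else, in particular the community-by-community decoupling and the arithmetic of the move count, is bookkeeping. Two small checks worth flagging for the write-up: verifying that for any $v_0 \in C$ the initial "move" that leaves $C$'s first singleton alone is not counted (so one really gets exactly $k-1$ moves per community, not $k$), and confirming that the argument transfers to modularity via the rescaled resolution $\gamma/(2m)$ and $\|v\|=k_v$ convention introduced just before the theorem, so that $\P^*$ being optimal for modularity still forces the same splitting contradiction.
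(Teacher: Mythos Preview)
Your proposal is correct and follows essentially the same approach as the paper: build each optimal community one node at a time, and at each step use optimality of $\P^*$ to show by contradiction (summing the assumed negative increments over all remaining nodes) that some node can be absorbed without decreasing $\Hf$. Your explicit remark that $\Delta\Hf(v\mapsto S)$ for a singleton $v$ depends only on $v$ and $S$, which justifies the community-by-community decoupling, is a point the paper leaves implicit but is otherwise the same argument.
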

\begin{proof}
  Let $C^* \in \P^*$ be a community in the optimal partition $\P^*$, let $v_0 \in C^*$ be a node in this community, and let $C_0 = \{v_0\}$.
  Let $\P_0 = \{\{v\} \mid v \in V\}$ be the singleton partition.
  For $t = 1, \ldots, |C^*| - 1$, let $v_t \in C^* - C_{t - 1}$, let $C_t = \{v_0, \ldots, v_t\} \in \P_t$, and let $\P_t = \P_{t - 1}(v_t \mapsto C_{t - 1})$.
  We prove by contradiction that there always exists a non-decreasing move sequence $\P_0, \ldots, \P_{|C^*| - 1}$.
  Assume that for some $t$ there does not exist a node $v_t$ for which $\Delta \Hf(v_t \mapsto C_{t - 1}) \geq 0$.
  Let $S = C^* - C_{t - 1}$ and $R = C_{t - 1}$.
  For all $v \in S$,
  \begin{equation*}
    E(v, R) - \gamma \|v\| \cdot \|R\| < 0.
  \end{equation*}
  This implies that
  \begin{equation*}
    E(S, R) = \sum_{v \in S} E(v, R) < \gamma \|S\| \cdot \|R\|.
  \end{equation*}
  However, by optimality, for all $S \subseteq C^*$ and $R = C^* - S$,
  \begin{equation*}
    E(S, R) \geq \gamma \|S\| \cdot \|R\|.
  \end{equation*}
  We therefore have a contradiction.
  Hence, there always exists a non-decreasing move sequence $\P_0, \ldots, \P_{|C^*| - 1}$.
  This move sequence reaches the community $C_t = C^*$.
  The above reasoning can be applied to each community $C^* \in \P^*$.
  Consequently, each of these communities can be reached using a non-decreasing move sequence.
  In addition, for each community $C^* \in \P^*$, this can be done in $|C^*| - 1$ steps, so that in total $\tau = \sum_{C^* \in \P^*} (|C^*| - 1) = n - |\P^*|$ steps are needed.
\end{proof}

\subsection{Greedy move sequences}
\label{sec:greedy_move_sequences}

\begin{figure}[bt]
  \begin{center}
    \includegraphics{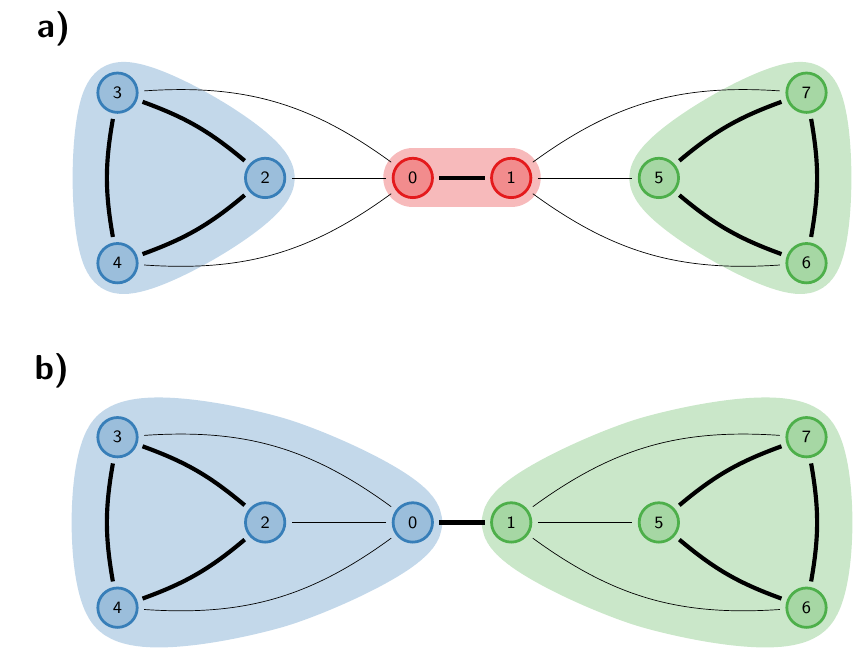}
  \end{center}
  \caption{\textbf{Unreachable optimal partition.}
    A greedy move sequence always reaches the partition in (a), whereas the partition in (b) is optimal.
    This demonstrates that for some graphs there does not exist a greedy move sequence that reaches the optimal partition.}
  \label{fig:unreachable_partition}
\end{figure}

We here show that there does not always exist a greedy move sequence that reaches the optimal partition of a graph.
To show this, we provide a counterexample in which we have a graph for which there is no greedy move sequence that reaches the optimal partition.
Our counterexample includes two nodes that should be assigned to different communities.
However, because there is a strong connection between the nodes, in a greedy move sequence the nodes are always assigned to the same community.
We use the CPM quality function in our counterexample, but a similar counterexample can be given for modularity.
The counterexample is illustrated in Fig.~\ref{fig:unreachable_partition}.
The thick edges have a weight of $3$, while the thin ones have a weight of $\frac{3}{2}$.
The resolution is set to $\gamma = 1$.
In this situation, nodes 0 and 1 are always joined together in a community.
This has a benefit of $3 - \gamma = 2$, which is larger than the benefit of $3 \cdot \frac{3}{2} - \gamma \cdot 3 = \frac{3}{2}$ obtained by node 0 joining the community of nodes 2, 3 and 4 or node 1 joining the community of nodes 5, 6 and 7.
Hence, regardless of the exact node order, the partition reached by a greedy move sequence always consists of three communities.
This gives a total quality of
\begin{equation*}
2 \cdot \left(3 \cdot 3 - \gamma \frac{3 \cdot 2}{2}\right) + \left(3 - \gamma \frac{2 \cdot 1}{2} \right) = 14,
\end{equation*}
while the optimal partition has only two communities, consisting of nodes $\{0, 2, 3, 4\}$ and $\{1, 5, 6, 7\}$ and resulting in a total quality of
\begin{equation*}
2 \cdot \left(3 \cdot 3 + 3 \cdot \frac{3}{2} - \gamma \frac{4 \cdot 3}{2} \right) = 15.
\end{equation*}
Hence, a greedy move sequence always reaches the partition in Fig.~\ref{fig:unreachable_partition}(a), whereas the partition in Fig.~\ref{fig:unreachable_partition}(b) is optimal.

\section{Guarantees of the Leiden algorithm}
\label{sec:guarantees}

In this appendix, we discuss the guarantees provided by the Leiden algorithm.
The guarantees of the Leiden algorithm partly rely on the randomness in the algorithm.
We therefore require that $\theta > 0$.
Before stating the guarantees of the Leiden algorithm, we first define a number of properties.
We start by introducing some relatively weak properties, and we then move on to stronger properties.
In the following definitions, $\P$ is a flat partition of a graph $G = (V, E)$.

\begin{definition}[$\gamma$-separation]
  We call a pair of communities $C, D \in \P$ \emph{$\gamma$-separated} if $\Delta \Hf(C \mapsto D) = \Delta \Hf(D \mapsto C) \leq 0$.
  A community $C \in \P$ is $\gamma$-separated if $C$ is $\gamma$-separated with respect to all $D \in \P$.
  A partition $\P$ is $\gamma$-separated if all $C \in \P$ are $\gamma$-separated.
  \label{def:gamma_separation}
\end{definition}

\begin{definition}[$\gamma$-connectivity]
  We call a set of nodes $S \subseteq C \in \P$ \emph{$\gamma$-connected} if $|S| = 1$ or if $S$ can be partitioned into two sets $R$ and $T$ such that $E(R, T) \geq \gamma \|R\| \cdot \|T\|$ and $R$ and $T$ are $\gamma$-connected.
  A community $C \in \P$ is $\gamma$-connected if $S = C$ is $\gamma$-connected.
  A partition $\P$ is $\gamma$-connected if all $C \in \P$ are $\gamma$-connected.
  \label{def:gamma_connectivity}
\end{definition}

\begin{definition}[Subpartition $\gamma$-density]
  We call a set of nodes $S \subseteq C \in \P$ \emph{subpartition $\gamma$-dense} if the following two conditions are satisfied: (i) $\Delta \Hf(S \mapsto \emptyset) \leq 0$ and (ii) $|S| = 1$ or $S$ can be partitioned into two sets $R$ and $T$ such that $E(R, T) \geq \gamma \|R\| \cdot \|T\|$ and $R$ and $T$ are subpartition $\gamma$-dense.
  A community $C \in \P$ is subpartition $\gamma$-dense if $S = C$ is subpartition $\gamma$-dense.
  A partition $\P$ is subpartition $\gamma$-dense if all $C \in \P$ are subpartition $\gamma$-dense.
  \label{def:subpartition_gamma_density}
\end{definition}

\begin{definition}[Node optimality]
  We call a community $C \in \P$ \emph{node optimal} if $\Delta \Hf(v \mapsto D) \leq 0$ for all $v \in C$ and all $D \in \P$ (or $D = \emptyset$).
  A partition $\P$ is node optimal if all $C \in \P$ are node optimal.
  \label{def:node_optimality}
\end{definition}

\begin{definition}[Uniform $\gamma$-density]
  We call a community $C \in \P$ \emph{uniformly $\gamma$-dense} if $\Delta \Hf(S \mapsto \emptyset) \leq 0$ for all $S \subseteq C$.
  A partition $\P$ is uniformly $\gamma$-dense if all $C \in \P$ are uniformly $\gamma$-dense.
  \label{def:uniform_gamma_density}
\end{definition}

\begin{definition}[Subset optimality]
  We call a community $C \in \P$ \emph{subset optimal} if $\Delta \Hf(S \mapsto D) \leq 0$ for all $S \subseteq C$ and all $D \in \P$ (or $D = \emptyset$).
  A partition $\P$ is subset optimal if all $C \in \P$ are subset optimal.
  \label{def:subset_optimality}
\end{definition}

Subset optimality clearly is the strongest property and subsumes all other properties.
Uniform $\gamma$-density is subsumed by subset optimality but may be somewhat more intuitive to grasp.
It states that any subset of nodes in a community is always connected to the rest of the community with a density of at least $\gamma$.
In other words, for all $S \subseteq C \in \P$ we have
\begin{equation}
  E(S, C - S) \geq \gamma \|S\| \cdot \|C - S\|.
\end{equation}
Imposing the restriction $D = \emptyset$ in the definition of subset optimality gives the property of uniform $\gamma$-density, restricting $S$ to consist of only one node gives the property of node optimality, and imposing the restriction $S = C$ yields the property of $\gamma$-separation.
Uniform $\gamma$-density implies subpartition $\gamma$-density, which in turn implies $\gamma$-connectivity.
Subpartition $\gamma$-density also implies that individual nodes cannot be split from their community (but notice that this is a weaker property than node optimality).
Ordinary connectivity is implied by $\gamma$-connectivity, but not vice versa.
Obviously, any optimal partition is subset optimal, but not the other way around: a subset optimal partition is not necessarily an optimal partition (see Fig.~\ref{fig:unreachable_partition}(a) for an example).

In the rest of this appendix, we show that the Leiden algorithm guarantees that the above properties hold for partitions produced by the algorithm.
The properties hold either in each iteration, in every stable iteration, or asymptotically.
The first two properties of $\gamma$-separation and $\gamma$-connectivity are guaranteed in each iteration of the Leiden algorithm.
We prove this in Appendix~\ref{sec:each_iteration}.
The next two properties of subpartition $\gamma$-density and node optimality are guaranteed in every stable iteration of the Leiden algorithm, as we prove in Appendix~\ref{sec:stable_iteration}.
Finally, in Appendix~\ref{sec:asymptotic} we prove that asymptotically the Leiden algorithm guarantees the last two properties of uniform $\gamma$-density and subset optimality.

\subsection{Guarantees in each iteration}
\label{sec:each_iteration}

In order to show that the property of $\gamma$-separation is guaranteed in each iteration of the Leiden algorithm, we first need to prove some results for the \textsc{MoveNodesFast} function in the Leiden algorithm.

We start by introducing some notation.
The \textsc{MoveNodesFast} function iteratively evaluates nodes.
When a node is evaluated, either it is moved to a different (possibly empty) community or it is kept in its current community, depending on what is most beneficial for the quality function.
Let $G = (V, E)$ be a graph, let $\P$ be a partition of $G$, and let $\P' = \textsc{MoveNodesFast}(G, \P)$.
We denote by $\P_0, \ldots, \P_r$ a sequence of partitions generated by the \textsc{MoveNodesFast} function, with $\P_0 = \P$ denoting the initial partition, $\P_1$ denoting the partition after the first evaluation of a node has taken place, and so on.
$\P_r = \P'$ denotes the partition after the final evaluation of a node has taken place.
The \textsc{MoveNodesFast} function maintains a queue of nodes that still need to be evaluated.
Let $Q_s$ be the set of nodes that still need to be evaluated after $s$ node evaluations have taken place, with $Q_0 = V$.
Also, for all $v \in V$, let $C_s^v \in \P_s$ be the community in which node $v$ finds itself after $s$ node evaluations have taken place.

The following lemma states that at any point in the \textsc{MoveNodesFast} function, if a node is disconnected from the rest of its community, the node will find itself in the queue of nodes that still need to be evaluated.
\begin{lemma}
  Using the notation introduced above, for all $v \in V$ and all $s$, we have $v \in Q_s$ or $|C_s^v| = 1$ or $E(v, C_s^v - v) > 0$.
  \label{lem:move_nodes_fast1}
\end{lemma}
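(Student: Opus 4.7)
The plan is to proceed by induction on $s$. The base case $s=0$ is immediate because $Q_0 = V$, so the first disjunct $v \in Q_0$ holds for every $v$. For the inductive step, let $u$ be the node removed from the queue and evaluated at step $s+1$, and let $C' \in \P_s \cup \{\emptyset\}$ be the target community chosen by the algorithm's $\argmax$. I would then analyse the two regimes $v = u$ and $v \neq u$ separately, splitting the latter by how $u$'s move modifies $C_s^v$.

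For $v = u$, if $C' = \emptyset$ then $C_{s+1}^u = \{u\}$ and the second disjunct holds. Otherwise $C' \in \P_s$, and I would prove $E(u, C_{s+1}^u - u) > 0$ by contradiction: if $u$ had no edge to $C_{s+1}^u - u$, the two parts $\{u\}$ and $C_{s+1}^u - u$ of the community $C_{s+1}^u$ are disconnected, so by the quality-function property stated in Appendix~\ref{sec:code_notation} the move $\{u\} \mapsto \emptyset$ would strictly increase $\Hf$. Translated back into a choice at step $s+1$, this means moving to $\emptyset$ strictly beats the actual target $C'$ in $\Delta \Hf$, contradicting the $\argmax$ (whether $u$ stayed, $C' = C_s^u$, or moved, $C' \neq C_s^u$).

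For $v \neq u$, I would split on how $u$'s step changes $C_s^v$. If $C_{s+1}^v = C_s^v$, the inductive disjuncts carry over verbatim, using $v \neq u$ to conclude $v \in Q_s \Rightarrow v \in Q_{s+1}$. If $C_{s+1}^v = C_s^v \cup \{u\}$, the third disjunct can only become easier; the only corner is $C_s^v = \{v\}$, in which case the same $\argmax$ argument applied to $u$'s choice forces $E(u,v) > 0$, hence $E(v, C_{s+1}^v - v) > 0$.

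The delicate subcase, which I expect to be the principal obstacle, is $C_{s+1}^v = C_s^v - \{u\}$, where $u$ leaves $v$'s community. Here $E(v, C_s^v - v)$ from the inductive hypothesis may collapse to zero after $u$'s removal. I would split on whether $v$ had a second neighbour in $C_s^v - \{v, u\}$: if so, that neighbour still lies in $C_{s+1}^v - v$ and the third disjunct survives; if not, then $v$ is adjacent to $u$ and, since $v \in C_s^v \neq C'$, lies outside $C'$, so the enqueue line of \textsc{MoveNodesFast} (which adds precisely the neighbours of $u$ outside $C'$ to the queue) places $v \in Q_{s+1}$, giving the first disjunct. Matching this final step carefully against the pseudo-code of Algorithm~\ref{algo:leiden}, in particular the condition that only neighbours outside $C'$ are re-enqueued, is the bookkeeping that ties the lemma to the \emph{fast} design of the local-move procedure.
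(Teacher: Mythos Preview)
Your proposal is correct and follows essentially the same inductive strategy as the paper. The only difference is organisational: the paper first disposes of the case $v \in Q_s$ and then splits on whether $v$ was the node just evaluated, whereas you split on $v = u$ versus $v \neq u$ and then track all three disjuncts through the subcases; both routes hinge on the same two ingredients, namely the quality-function property that a node disconnected from its community can be strictly improved by moving to $\emptyset$ (handling the evaluated node) and the enqueue rule at line~\ref{algo:leiden:add_neighs_to_queue} (handling a neighbour whose only internal link just left).
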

\begin{proof}
  We are going to prove the lemma for an arbitrary node $v \in V$.
  We provide a proof by induction.
  We observe that $v \in Q_0$, which provides our inductive base.
  Suppose that $v \in Q_{s - 1}$ or $|C_{s - 1}^v| = 1$ or $E(v, C_{s - 1}^v - v) > 0$.
  This is our inductive hypothesis.
  We are going to show that $v \in Q_s$ or $|C_s^v| = 1$ or $E(v, C_s^v - v) > 0$.
  If $v \in Q_s$, this result is obtained in a trivial way.
  Suppose therefore that $v \notin Q_s$.
  We then need to show that $|C_s^v| = 1$ or $E(v, C_s^v - v) > 0$.
  To do so, we distinguish between two cases.

  We first consider the case in which $v \in Q_{s - 1}$.
  If $v \in Q_{s - 1}$ and $v \notin Q_s$, node $v$ has just been evaluated.
  We then obviously have $|C_s^v| = 1$ or $E(v, C_s^v - v) > 0$.
  Otherwise we would have $|C_s^v| > 1$ and $E(v, C_s^v - v) = 0$, which would mean that node $v$ is disconnected from the rest of its community.
  Since node $v$ has just been evaluated, this is not possible.

  We now consider the case in which $v \notin Q_{s - 1}$.
  Let $u \in V$ be the node that has just been evaluated, i.e., $u \in Q_{s - 1}$ and $u \notin Q_s$.
  If node $u$ has not been moved to a different community, then $\P_s = \P_{s - 1}$.
  Obviously, if $|C_{s - 1}^v| = 1$ or $E(v, C_{s - 1}^v - v) > 0$, we then have $|C_s^v| = 1$ or $E(v, C_s^v - v) > 0$.
  On the other hand, if node $u$ has been moved to a different community, we have $(u, v) \notin E(G)$ or $v \in C_s^u$.
  To see this, note that if $(u, v) \in E(G)$ and $v \notin C_s^u$, we would have $v \in Q_s$ (following line~\ref{algo:leiden:add_neighs_to_queue} in Algorithm~\ref{algo:leiden}).
  This contradicts our assumption that $v \notin Q_s$, so that we must have $(u, v) \notin E(G)$ or $v \in C_s^u$.
  In other words, either there is no edge between nodes $u$ and $v$ or node $u$ has been moved to the community of node $v$.
  In either case, it is not possible that the movement of node $u$ causes node $v$ to become disconnected from the rest of its community.
  Hence, in either case, if $|C_{s - 1}^v| = 1$ or $E(v, C_{s - 1}^v - v) > 0$, then $|C_s^v| = 1$ or $E(v, C_s^v - v) > 0$.
\end{proof}

Using Lemma~\ref{lem:move_nodes_fast1}, we now prove the following lemma, which states that for partitions provided by the \textsc{MoveNodesFast} function it is guaranteed that singleton communities cannot be merged with each other.
\begin{lemma}
  Let $G = (V, E)$ be a graph, let $\P$ be a partition of $G$, and let $\P' = \textsc{MoveNodesFast}(G, \P)$.
  Then for all pairs $C, D \in \P'$ such that $|C| = |D| = 1$, we have $\Delta \Hf(C \mapsto D) = \Delta \Hf(D \mapsto C) \leq 0$.
  \label{lem:move_nodes_fast2}
\end{lemma}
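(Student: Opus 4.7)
The plan is to exploit the optimality of each node's last evaluation in the fast local move procedure. Let $T_u$ and $T_v$ denote the last step at which $u$ and $v$, respectively, sit in the queue $Q$ (these exist because every node is initially enqueued), and without loss of generality assume $T_u \leq T_v$. The main goal is to show that $\{u\} \in \P_{T_v}$; the lemma will then follow from a short algebraic identity applied to $v$'s evaluation at step $T_v$.

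To establish $\{u\} \in \P_{T_v}$, I would prove the stronger statement that $C_t^u = \{u\}$ for every $t \in [T_u+1, r]$. First I would analyse the three possible outcomes of $u$'s evaluation at step $T_u$. If $u$ moves to a non-empty community $C^*$, then $C^*$ being preferred over $\emptyset$ forces $E(u, C^*) \geq \gamma \|u\| \|C^*\| > 0$ under CPM, so some neighbour $w$ of $u$ ends up together with $u$ in $C^* \cup \{u\}$; for $u$ to end up alone by step $r$, $w$ must leave later, but the queue-re-insertion rule on line~\ref{algo:leiden:add_neighs_to_queue} would then reinsert $u$ into $Q$ after step $T_u$, contradicting $T_u$ being the last appearance of $u$ in $Q$. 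If $u$ does not move, then $\Delta \Hf(u \mapsto \emptyset) \leq 0$ gives $E(u, C_{T_u}^u - u) \geq \gamma \|u\| \|C_{T_u}^u - u\|$; on the other hand, the same re-insertion reasoning forces every element of $C_{T_u}^u - u$ that eventually leaves to be a non-neighbour of $u$, so $E(u, C_{T_u}^u - u) = 0$, and the two inequalities together collapse $C_{T_u}^u$ to $\{u\}$. Thus either $u$ was already singleton and stayed, or $u$ moved to $\emptyset$, and in both cases $C_{T_u+1}^u = \{u\}$. An analogous re-insertion argument, combined with comparing $\Delta\Hf(w \mapsto \{u\})$ to $\Delta\Hf(w \mapsto \emptyset)$, shows that no later node can move into $\{u\}$: such a mover $w$ would need $E(w,u) \geq \gamma \|w\| \|u\| > 0$ and would then be trapped next to $u$, contradicting $C_r^u = \{u\}$. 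Hence $\{u\}$ persists as a community throughout $[T_u+1, r]$, and in particular $\{u\} \in \P_{T_v}$.

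With $\{u\} \in \P_{T_v}$ in hand, I would apply the same outcome analysis to $v$'s evaluation at step $T_v$: $v$ either stays put as a singleton (option I) or moves to $\emptyset$ out of a non-singleton community (option II). A direct CPM computation gives the key identity
\begin{equation*}
  \Delta \Hf_{\P_{T_v}}(v \mapsto \{u\}) - \Delta \Hf_{\P_{T_v}}(v \mapsto \emptyset) = E(u,v) - \gamma \|u\| \|v\| = \Delta \Hf_{\P'}(\{u\} \mapsto \{v\}),
\end{equation*}
since the contribution from $v$'s current community cancels between the two $\Delta\Hf$ expressions. In option I nothing moves, so every $\Delta\Hf$ is $\leq 0$ and in particular $\Delta\Hf(v \mapsto \{u\}) \leq 0 = \Delta\Hf(v \mapsto \emptyset)$; in option II the maximiser is $\emptyset$, so again $\Delta\Hf(v \mapsto \{u\}) \leq \Delta\Hf(v \mapsto \emptyset)$. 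Either way, the identity yields $\Delta\Hf_{\P'}(\{u\} \mapsto \{v\}) \leq 0$, and since the right-hand side is symmetric in $u$ and $v$, the same bound holds for $\Delta\Hf_{\P'}(\{v\} \mapsto \{u\})$.

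The hard part of this plan is the first step: ruling out the scenario in which $u$ ends up, just after its last evaluation, inside a non-trivial community of non-neighbours. The contradiction there requires combining two ingredients --- the algebraic inequality $\Delta\Hf(u \mapsto \emptyset) \leq 0$ implied by $u$ not moving, and the queue bookkeeping promised by line~\ref{algo:leiden:add_neighs_to_queue} --- which together pin $u$ to the singleton $\{u\}$ from step $T_u + 1$ onward. Everything else is a one-line consequence of the structural identity relating $\Delta\Hf(v \mapsto \{u\})$ and $\Delta\Hf(v \mapsto \emptyset)$.
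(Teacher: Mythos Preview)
Your argument is correct and reaches the conclusion by a genuinely different decomposition than the paper's. The paper fixes the \emph{first} time $t$ at which both singletons $C$ and $D$ coexist as communities and persist thereafter; taking $C=\{v\}$ to be the one just created, it then splits into (i) $v$ itself moved to $\emptyset$ at that step, so $\emptyset$ beat $D$ in that very evaluation, or (ii) $v$ sat in a two-element community whose other node moved away, in which case Lemma~\ref{lem:move_nodes_fast1} is invoked to show $v$ is still in the queue and will be re-evaluated at some later step where both $C$ and $D$ are present. Your route instead tracks the \emph{last} queue time of each node and proves the stronger persistence statement that $\{u\}$ is a community for all $t\ge T_u+1$, replacing the appeal to Lemma~\ref{lem:move_nodes_fast1} by a direct CPM-plus-reinsertion argument. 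The underlying mechanisms are the same (the re-enqueueing rule on line~\ref{algo:leiden:add_neighs_to_queue} and comparison against the move to $\emptyset$), but the paper's version is shorter because Lemma~\ref{lem:move_nodes_fast1} absorbs exactly the case analysis you carry out when $u$ does not move; conversely, your version is more self-contained and makes the structural identity $\Delta\Hf(v\mapsto\{u\})-\Delta\Hf(v\mapsto\emptyset)=E(u,v)-\gamma\|u\|\,\|v\|$ explicit, which immediately explains the symmetry of the conclusion in $u$ and $v$.
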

\begin{proof}
  We are going to prove the lemma for an arbitrary pair of communities $C, D \in \P'$ such that $|C| = |D| = 1$.
  We use the notation introduced above.
  If $C, D \in \P_s$ for all $s$, it is clear that $\Delta \Hf(C \mapsto D) = \Delta \Hf(D \mapsto C) \leq 0$.
  Otherwise, consider $t$ such that $C, D \in \P_s$ for all $s \geq t$ and either $C \notin \P_{t - 1}$ or $D \notin \P_{t - 1}$.
  Without loss of generality, we assume that $C \notin \P_{t - 1}$ and $D \in \P_{t - 1}$.
  Consider $v \in V$ such that $C = \{v\}$.
  After $t - 1$ node evaluations have taken place, there are two possibilities.

  One possibility is that node $v$ is evaluated and is moved to an empty community.
  This means that moving node $v$ to an empty community is more beneficial for the quality function than moving node $v$ to community $D$.
  It is then clear that $\Delta \Hf(C \mapsto D) = \Delta \Hf(D \mapsto C) \leq 0$.

  The second possibility is that node $v$ is in a community together with one other node $u \in V$ (i.e. $\{u, v\} \in \P_{t - 1}$) and that this node $u$ is evaluated and is moved to a different community.
  In this case, $v \in Q_t$, as we will now show.
  If $(u, v) \in E(G)$, this follows from line~\ref{algo:leiden:add_neighs_to_queue} in Algorithm~\ref{algo:leiden}.
  If $(u, v) \notin E(G)$, we have $|C_{t - 1}^v| = |\{u, v\}| = 2$ and $E(v, C_{t - 1}^v - v) = 0$.
  It then follows from Lemma~\ref{lem:move_nodes_fast1} that $v \in Q_{t - 1}$.
  Since node $v$ is not evaluated in node evaluation $t$ (node $u$ is evaluated in this node evaluation), $v \in Q_{t - 1}$ implies that $v \in Q_t$.
  If $v \in Q_t$, at some point $s \geq t$, node $v$ is evaluated.
  Since $C, D \in \P_s$ for all $s \geq t$, keeping node $v$ in its own singleton community $C$ is more beneficial for the quality function than moving node $v$ to community $D$.
  This means that $\Delta \Hf(C \mapsto D) = \Delta \Hf(D \mapsto C) \leq 0$.
\end{proof}

Lemma~\ref{lem:move_nodes_fast2} enables us to prove that the property of $\gamma$-separation is guaranteed in each iteration of the Leiden algorithm, as stated in the following theorem.
\begin{theorem}
  Let $G = (V, E)$ be a graph, let $\P_t$ be a flat partition of $G$, and let $\P_{t + 1} = \textsc{Leiden}(G, \P_t)$.
  Then $\P_{t + 1}$ is $\gamma$-separated.
  \label{thm:gamma_separation}
\end{theorem}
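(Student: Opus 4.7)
The plan is to reduce $\gamma$-separation of $\P_{t+1}$ to Lemma~\ref{lem:move_nodes_fast2} applied at the outermost level of aggregation, and then to lift the resulting inequality from the terminal aggregate graph back to the base graph via the aggregation-consistency property of $\Hf$.

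Let $G^*$ denote the (possibly aggregated) graph on which the final call to \textsc{MoveNodesFast} inside \textsc{Leiden} operates, and let $\P^*$ denote its output. The loop in Algorithm~\ref{algo:leiden} exits exactly when $|\P^*| = |V(G^*)|$, so $\P^*$ is the singleton partition of $G^*$ and the returned partition is $\P_{t+1} = \flatf^*(\P^*)$. For any two distinct communities $C, D \in \P^*$ we then have $|C| = |D| = 1$, so Lemma~\ref{lem:move_nodes_fast2} directly yields $\Delta \Hf_{\P^*}(C \mapsto D) = \Delta \Hf_{\P^*}(D \mapsto C) \leq 0$ computed in $G^*$. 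Hence $\P^*$ is $\gamma$-separated as a partition of $G^*$.

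It remains to transfer this inequality from $G^*$ to the base graph $G$. The stated consistency requirement, that $\Hf(G, \P) = \Hf(G', \P')$ with $\P'$ the singleton partition of the aggregation $G'$ of $G$ by $\P$, extends inductively to the equality $\Hf(G^*, \Qf) = \Hf(G, \flatf^*(\Qf))$ for an arbitrary partition $\Qf$ of $G^*$, because aggregating $G^*$ by $\Qf$ and aggregating $G$ by $\flatf^*(\Qf)$ produce the same labelled multigraph. Applying this identity both to $\P^*$ and to the modified partition $\P^*(C \mapsto D)$ yields $\Delta \Hf_{\P^*}(C \mapsto D) = \Delta \Hf_{\P_{t+1}}(\flatf(C) \mapsto \flatf(D))$, which together with the previous step proves $\gamma$-separation of $\P_{t+1}$. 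The main obstacle I expect is a clean formulation of the flatten--aggregate commutativity used here; once that is in place, the rest of the proof is just reading off the consequence of Lemma~\ref{lem:move_nodes_fast2} at the terminal aggregation level.
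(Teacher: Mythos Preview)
Your proposal is correct and follows essentially the same route as the paper: identify the terminal aggregate graph, observe that the termination condition forces the output of \textsc{MoveNodesFast} there to be a singleton partition, apply Lemma~\ref{lem:move_nodes_fast2}, and then pull the inequality back to the base graph via flattening. The paper's proof is terser on the last step (it simply asserts that the inequalities transfer along $\flatf^*$), whereas you spell out the aggregation-consistency argument explicitly; otherwise the two proofs are the same.
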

\begin{proof}
  Let $G_\ell = (V_\ell, E_\ell)$ be the aggregate graph at the highest level in the Leiden algorithm, let $\P_\ell$ be the initial partition of $G_\ell$, and let $\P'_\ell = \textsc{MoveNodesFast}(G_\ell, \P_\ell)$.
  Since we are at the highest level of aggregation, it follows from line~\ref{algo:leiden:stop_criterion} in Algorithm~\ref{algo:leiden} that $|\P'_\ell| = |V_\ell|$, which means that $|C| = 1$ for all $C \in \P'_\ell$.
  In other words, $\P'_\ell$ is a singleton partition of $G_\ell$.
  Lemma~\ref{lem:move_nodes_fast2} then implies that for all $C, D \in \P'_\ell$ we have $\Delta \Hf(C \mapsto D) = \Delta \Hf(D \mapsto C) \leq 0$.
  Since $\P_{t + 1} = \flatf^*(\P'_\ell)$, it follows that for all $C, D \in \P_{t + 1}$ we have $\Delta \Hf(C \mapsto D) = \Delta \Hf(D \mapsto C) \leq 0$.
  Hence, $\P_{t + 1}$ is $\gamma$-separated.
\end{proof}

The property of $\gamma$-separation also holds after each iteration of the Louvain algorithm.
In fact, for the Louvain algorithm this is much easier to see than for the Leiden algorithm.
The Louvain algorithm uses the \textsc{MoveNodes} function instead of the \textsc{MoveNodesFast} function.
Unlike the \textsc{MoveNodesFast} function, the \textsc{MoveNodes} function yields partitions that are guaranteed to be node optimal.
This guarantee leads in a straightforward way to the property of $\gamma$-separation for partitions obtained in each iteration of the Louvain algorithm.

We now consider the property of $\gamma$-connectivity.
By constructing a tree corresponding to the decomposition of $\gamma$-connectivity, we are going to prove that this property is guaranteed in each iteration of the Leiden algorithm.
\begin{theorem}
  Let $G = (V, E)$ be a graph, let $\P_t$ be a flat partition of $G$, and let $\P_{t + 1} = \textsc{Leiden}(G, \P_t)$.
  Then $\P_{t + 1}$ is $\gamma$-connected.
  \label{thm:gamma_connectedness}
\end{theorem}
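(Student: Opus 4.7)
The approach is to trace every community of $\P_{t+1}$ backward through the aggregation levels of the Leiden algorithm, reading off its $\gamma$-connectivity decomposition from the merges performed by \textsc{MergeNodesSubset}. Since the loop in Algorithm~\ref{algo:leiden} terminates only when \textsc{MoveNodesFast} returns a partition with $|\P|=|V(G)|$ (line~\ref{algo:leiden:stop_criterion}), the final partition $\P$ at the top aggregation level is a singleton partition, and every community of $\P_{t+1}=\flatf^*(\P)$ equals $\flatf(v)$ for exactly one top-level aggregate node $v$. It therefore suffices to show, for every aggregate object $v$ encountered during the run (whether an aggregate node or an intermediate subcommunity built up inside some call to \textsc{MergeNodesSubset}), that $\flatf(v)\subseteq V(G)$ is $\gamma$-connected in the sense of Definition~\ref{def:gamma_connectivity}.

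I would proceed by induction on the recursive size $\|v\|$. The base case $\|v\|=1$ is immediate from Definition~\ref{def:gamma_connectivity}. For $\|v\|>1$, $v$ must be a refined community $C^\star\in\P^{(\ell)}_\text{refined}$ at some aggregation level $\ell$, assembled by \textsc{MergeNodesSubset} through a sequence of singleton-into-subcommunity merges. Consider the \emph{last} such merge contributing to $C^\star$: it added some node $u\in V(G^{(\ell)})$ to a strictly smaller subcommunity $C'$, yielding $C^\star=C'\cup\{u\}$ with $\|u\|,\|C'\|<\|v\|$. That this merge fired at all means line~\ref{algo:leiden:merge_prob} assigned it positive probability, which requires $\Delta\Hf_\P(u\mapsto C')\ge 0$. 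A direct calculation for CPM gives
\begin{equation*}
  \Delta\Hf_\P(u\mapsto C') = E_{G^{(\ell)}}(u,C') - \gamma\,\|u\|\cdot\|C'\|,
\end{equation*}
so the merge condition is precisely $E_{G^{(\ell)}}(u,C')\ge\gamma\,\|u\|\cdot\|C'\|$.

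The last step is to lift this inequality from $G^{(\ell)}$ to $G$. By the construction of the aggregate graph, its multiedges are in bijection with edges of $G$ between the flattenings of the endpoints, so $E_{G^{(\ell)}}(u,C')=E_G(\flatf(u),\flatf(C'))$; likewise $\|u\|=|\flatf(u)|$ and $\|C'\|=|\flatf(C')|$. Setting $R=\flatf(u)$ and $T=\flatf(C')$ exhibits a partition $\flatf(v)=R\cup T$ satisfying $E_G(R,T)\ge\gamma\,\|R\|\cdot\|T\|$. Since $\|u\|,\|C'\|<\|v\|$, the inductive hypothesis applies to both $u$ and $C'$, so $R$ and $T$ are individually $\gamma$-connected, and therefore so is $\flatf(v)$.

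The main obstacle is getting the induction hypothesis strong enough to cover both actors in each merge uniformly. The node $u$ is itself an aggregate object whose flattening was built up by a merge tree at a lower aggregation level, while $C'$ was built by earlier merges \emph{at the same level}; inducting on $\|v\|$ sidesteps this asymmetry, since any constituent of $v$ is strictly smaller in recursive size regardless of which level or which step produced it. A secondary technical point is that the equivalence ``$\Delta\Hf\ge 0$ iff edge density $\ge\gamma$'' must be re-derived for modularity using the rescaled node weights $\|v\|=k_v$ and resolution $\gamma/(2m)$ of Equation~\eqref{eq:modularity}; the rest of the argument is then unchanged.
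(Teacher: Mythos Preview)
Your proof is correct and follows the same strategy as the paper's: both trace each community back through the merge history recorded by \textsc{MergeNodesSubset} and read off the inequality $E(u,C')\ge\gamma\,\|u\|\cdot\|C'\|$ from the non-negativity condition at line~\ref{algo:leiden:merge_prob}. The only difference is organizational---the paper inducts on the aggregation level $\ell$ (with an inner chain argument over the merge sequence $S_1\subset\cdots\subset S_k$ within each level), whereas your single induction on the recursive size $\|v\|$ folds both inductions into one.
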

\begin{proof}
  Let $G_\ell = (V_\ell, E_\ell)$ be the aggregate graph at level $\ell$ in the Leiden algorithm, with $G_0 = G$ being the base graph.
  We say that a node $v \in V_\ell$ is $\gamma$-connected if $\flatf(v)$ is $\gamma$-connected.
  We are going to proceed inductively.
  Each node in the base graph $G_0$ is trivially $\gamma$-connected.
  This provides our inductive base.
  Suppose that each node $v \in V_{\ell - 1}$ is $\gamma$-connected, which is our inductive hypothesis.
  Each node $v \in V_\ell$ is obtained by merging one or more nodes at the preceding level, i.e. $v = \{u \mid u \in S\}$ for some set $S \subseteq V_{\ell - 1}$.
  If $v$ consists of only one node at the preceding level, $v$ is immediately $\gamma$-connected by our inductive hypothesis.
  The set of nodes $S$ is constructed in the \textsc{MergeNodesSubset} function.
  There exists some order $u_1, \ldots, u_k$ in which nodes are added to $S$.
  Let $S_i = \{u_1, \ldots, u_i\}$ be the set obtained after adding node $u_i$.
  It follows from line~\ref{algo:leiden:merge_prob} in Algorithm~\ref{algo:leiden} that $E(u_{i + 1}, S_i) \geq \gamma \|u_{i + 1}\| \cdot \|S_i\|$ for $i = 1, \ldots, k - 1$.
  Taking into account that each $u_i$ is $\gamma$-connected by our inductive hypothesis, this implies that each set $S_i$ is $\gamma$-connected.
  Since $S = S_k$ is $\gamma$-connected, node $v$ is $\gamma$-connected.
  Hence, each node $v \in V_\ell$ is $\gamma$-connected.
  This also holds for the nodes in the aggregate graph at the highest level in the Leiden algorithm, which implies that all communities in $\P_{t + 1}$ are $\gamma$-connected.
  In other words, $\P_{t + 1}$ is $\gamma$-connected.
\end{proof}
Note that the theorem does not require $\P_t$ to be connected.
Even if a disconnected partition is provided as input to the Leiden algorithm, performing a single iteration of the algorithm will give a partition that is $\gamma$-connected.

\subsection{Guarantees in stable iterations}
\label{sec:stable_iteration}

As discussed earlier, the Leiden algorithm can be iterated until $\P_{t + 1} = \textsc{Leiden}(G, \P_t)$.
Likewise, the Louvain algorithm can be iterated until $\P_{t + 1} = \textsc{Louvain}(G, \P_t)$.
We say that an iteration is \emph{stable} if $\P_{t + 1} = \P_t$, in which case we call $\P_t$ (or $\P_{t + 1}$) a \emph{stable partition}.

There is a subtle point when considering stable iterations.
In order for the below guarantees to hold, we need to ensure that $\Hf(\P_{t + 1}) = \Hf(\P_t)$ implies $\P_{t + 1} = \P_t$.
In both the Leiden algorithm and the Louvain algorithm, we therefore consider only strictly positive improvements (see line~\ref{algo:louvain:strict_increase_move} in Algorithm~\ref{algo:louvain} and line~\ref{algo:leiden:strict_increase_move} in Algorithm~\ref{algo:leiden}).
In other words, if a node movement leads to a partition that has the same quality as the current partition, the current partition is preferred and the node movement will not take place.
This then also implies that $\Hf(\P_{t + 1}) > \Hf(\P_t)$ if $\P_{t + 1} \neq \P_t$.

The Leiden algorithm guarantees that a stable partition is subpartition $\gamma$-dense, as stated in the following theorem. Note that the proof of the theorem has a structure that is similar to the structure of the proof of Theorem~\ref{thm:gamma_connectedness} presented above.
\begin{theorem}
  Let $G = (V, E)$ be a graph, let $\P_t$ be a flat partition of $G$, and let $\P_{t + 1} = \textsc{Leiden}(G, \P_t)$.
  If $\P_{t + 1} = \P_t$, then $\P_{t + 1} = \P_t$ is subpartition $\gamma$-dense.
  \label{thm:subpartition_gamma_density}
\end{theorem}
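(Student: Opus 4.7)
The plan is to adapt the level-by-level induction used in the proof of Theorem~\ref{thm:gamma_connectedness}, strengthening it so that the extra condition~(i) of Definition~\ref{def:subpartition_gamma_density} is also tracked across levels. I would begin with a preliminary step that extracts the full force of the stability assumption. Because line~\ref{algo:leiden:strict_increase_move} only accepts strictly improving moves, any move actually performed by \textsc{MoveNodesFast} at any aggregation level would strictly increase $\Hf$, while refinement and aggregation preserve the quality of the flattened partition (the maintained partition at the next level has the same flattening). A single strict improvement anywhere in the algorithm would therefore give $\Hf(\P_{t+1}) > \Hf(\P_t)$ and contradict $\P_{t+1} = \P_t$. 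So \textsc{MoveNodesFast} performs no moves at any level, from which two facts follow: (a) for every level $\ell$, the flattening of the level-$\ell$ maintained partition $\P_\ell$ equals $\P_{t+1}$, so each super-community $S \in \P_\ell$ has $\flatf(S) \in \P_{t+1}$; and (b) every node $v \in V_\ell$ is node optimal, so $\Delta\Hf(v \mapsto \emptyset) \leq 0$ at level $\ell$ and therefore $\Delta\Hf(\flatf(v) \mapsto \emptyset) \leq 0$ relative to $\flatf(S) \in \P_{t+1}$.

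The main argument is a double induction. The outer induction on the aggregation level $\ell$ asserts that $\flatf(v)$ is subpartition $\gamma$-dense for every $v \in V_\ell$. The base case $\ell = 0$ reduces to $\flatf(v) = \{v\}$: condition~(ii) is vacuous and condition~(i) is (b) applied at the base level. For the inductive step I mimic the construction used in the proof of Theorem~\ref{thm:gamma_connectedness}. A non-trivial $v \in V_\ell$ was assembled in a single call to \textsc{MergeNodesSubset} at level $\ell - 1$ inside some super-community $S \in \P_{\ell - 1}$; the history of absorptions yields an order $u_1, \dots, u_r \in V_{\ell - 1}$ and a growing chain $S_j = \{u_1, \dots, u_j\}$ such that every $S_j$ with $j < r$ appeared as a merge target in the set $\T$ of line~\ref{algo:leiden:strict_merge2}, and $E(u_{j+1}, S_j) \geq \gamma \|u_{j+1}\| \cdot \|S_j\|$ holds for every $j$ because line~\ref{algo:leiden:merge_prob} assigns zero probability to non-improving merges. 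A trivial $v = \{u\}$ inherits subpartition $\gamma$-density directly from the outer hypothesis applied to $u$.

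An inner induction on $j$ then proves that each $\flatf(S_j)$ is subpartition $\gamma$-dense by partitioning it as $R = \flatf(S_{j-1})$ and $T = \flatf(u_j)$. The inequality $E(R, T) \geq \gamma \|R\| \cdot \|T\|$ is the flattened form of the merge condition (edges and recursive sizes are both preserved by $\flatf$), while $R$ and $T$ are subpartition $\gamma$-dense by the inner and outer hypotheses, respectively. The only genuinely new piece compared to Theorem~\ref{thm:gamma_connectedness} is condition~(i). For $j < r$, membership of $S_j$ in $\T$ gives $E(S_j, S - S_j) \geq \gamma \|S_j\| \cdot (\|S\| - \|S_j\|)$, which after expanding $\Delta\Hf$ for CPM is exactly $\Delta\Hf(\flatf(S_j) \mapsto \emptyset) \leq 0$ relative to $\flatf(S) \in \P_{t+1}$. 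For the terminal index $j = r$, $S_r = v$ may never have been used as a merge target; here I close the gap with (b) applied at level $\ell$, which yields $\Delta\Hf(v \mapsto \emptyset) \leq 0$ and hence $\Delta\Hf(\flatf(v) \mapsto \emptyset) \leq 0$ after flattening. Applying the outer induction at the terminal aggregation level then gives subpartition $\gamma$-density for every community of $\P_{t+1} = \P_t$.

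I expect the main obstacle to be the bookkeeping that links the super-community $S$ used inside \textsc{MergeNodesSubset} at an intermediate level to the matching element $\flatf(S) \in \P_{t+1}$, together with verifying the elementary identity that turns the aggregate-level well-connectedness inequality of line~\ref{algo:leiden:strict_merge2} into the base-level statement $\Delta\Hf(\cdot \mapsto \emptyset) \leq 0$ for CPM. A second delicate point is the treatment of the terminal set $S_r$: unlike in the $\gamma$-connectedness argument, it cannot be handled by a merge-target argument and instead requires level-$\ell$ node optimality, which is the step that genuinely uses the stability assumption rather than only the per-iteration guarantees of Theorems~\ref{thm:gamma_separation} and~\ref{thm:gamma_connectedness}.
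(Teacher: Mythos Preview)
Your proposal is correct and follows essentially the same approach as the paper's proof: both extract from stability the observation that $\Delta\Hf(v \mapsto \emptyset) \leq 0$ for every node at every aggregation level, then run the level-by-level induction of Theorem~\ref{thm:gamma_connectedness}, using line~\ref{algo:leiden:strict_merge2} to secure condition~(i) for the intermediate sets $S_j$ with $j<r$ and the stability observation for the terminal set $S_r=v$. Your explicit framing as a double induction and your attention to the flattening bookkeeping are spelled out more carefully than in the paper, but the underlying argument is the same.
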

\begin{proof}
  Suppose we have a stable iteration.
  Hence, $\P_{t + 1} = \P_t$.
  Let $G_\ell = (V_\ell, E_\ell)$ be the aggregate graph at level $\ell$ in the Leiden algorithm, with $G_0 = G$ being the base graph.
  We say that a node $v \in V_\ell$ is subpartition $\gamma$-dense if the set of nodes $\flatf(v)$ is subpartition $\gamma$-dense.
  We first observe that for all levels $\ell$ and all nodes $v \in V_\ell$ we have $\Delta \Hf(v \mapsto \emptyset) \leq 0$.
  To see this, note that if $\Delta \Hf(v \mapsto \emptyset) > 0$ for some level $\ell$ and some node $v \in V_\ell$, the \textsc{MoveNodesFast} function would have removed node $v$ from its community, which means that the iteration would not have been stable.
  We are now going to proceed inductively.
  Since $\Delta \Hf(v \mapsto \emptyset) \leq 0$ for all nodes $v \in V_0$, each node in the base graph $G_0$ is subpartition $\gamma$-dense.
  This provides our inductive base.
  Suppose that each node $v \in V_{\ell - 1}$ is subpartition $\gamma$-dense, which is our inductive hypothesis.
  Each node $v \in V_\ell$ is obtained by merging one or more nodes at the preceding level, i.e. $v = \{u \mid u \in S\}$ for some set $S \subseteq V_{\ell - 1}$.
  If $v$ consists of only one node at the preceding level, $v$ is immediately subpartition $\gamma$-dense by our inductive hypothesis.
  The set of nodes $S$ is constructed in the \textsc{MergeNodesSubset} function.
  There exists some order $u_1, \ldots, u_k$ in which nodes are added to $S$.
  Let $S_i = \{u_1, \ldots, u_i\}$ be the set obtained after adding node $u_i$.
  It follows from line~\ref{algo:leiden:merge_prob} in Algorithm~\ref{algo:leiden} that $E(u_{i + 1}, S_i) \geq \gamma \|u_{i + 1}\| \cdot \|S_i\|$ for $i = 1, \ldots, k - 1$.
  Furthermore, line~\ref{algo:leiden:strict_merge2} in Algorithm~\ref{algo:leiden} ensures that $\Delta \Hf(S_i \mapsto \emptyset) \leq 0$ for $i = 1, \ldots, k - 1$.
  We also have $\Delta \Hf(S_k \mapsto \emptyset) \leq 0$, since $S_k = S = v$ and since $\Delta \Hf(v \mapsto \emptyset) \leq 0$, as observed above.
  Taking into account that each $u_i$ is subpartition $\gamma$-dense by our inductive hypothesis, this implies that each set $S_i$ is subpartition $\gamma$-dense.
  Since $S = S_k$ is subpartition $\gamma$-dense, node $v$ is subpartition $\gamma$-dense.
  Hence, each node $v \in V_\ell$ is subpartition $\gamma$-dense.
  This also holds for the nodes in the aggregate graph at the highest level in the Leiden algorithm, which implies that all communities in $\P_{t + 1} = \P_t$ are subpartition $\gamma$-dense.
  In other words, $\P_{t + 1} = \P_t$ is subpartition $\gamma$-dense.
\end{proof}

Subpartition $\gamma$-density does not imply node optimality.
It guarantees only that $\Delta \Hf(v \mapsto \emptyset) \leq 0$ for all $v \in V$, not that $\Delta \Hf(v \mapsto D) \leq 0$ for all $v \in V$ and all $D \in \P$.
However, it is easy to see that all nodes are locally optimally assigned in a stable iteration of the Leiden algorithm.
This is stated in the following theorem.
\begin{theorem}
  Let $G = (V, E)$ be a graph, let $\P_t$ be a flat partition of $G$, and let $\P_{t + 1} = \textsc{Leiden}(G, \P_t)$.
  If $\P_{t + 1} = \P_t$, then $\P_{t + 1} = \P_t$ is node optimal.
  \label{thm:node_optimality}
\end{theorem}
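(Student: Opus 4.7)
The plan is to argue that stability of an iteration forces the very first call to \textsc{MoveNodesFast} at the base graph $G_0 = G$ to make no moves at all, and then to read node optimality directly off the exit condition of that call. The final partition $\P_{t+1}$ is $\flatf^*$ of the partition at the highest aggregation level, so if we can establish that the base-level \textsc{MoveNodesFast} leaves $\P_t$ untouched, then every $v \in V$ has been examined once under the partition $\P_t$ and rejected for any move, which is exactly the statement of node optimality.

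First I would record the monotonicity of $\Hf$ along a single run of \textsc{Leiden}. The guard on line~\ref{algo:leiden:strict_increase_move} restricts \textsc{MoveNodesFast} to strictly positive moves, so any move this subroutine performs strictly increases $\Hf$. The refinement phase modifies only $\P_\text{refined}$, not $\P$, so it does not change $\Hf(\P)$. Aggregation preserves $\Hf$ by the consistency requirement on quality functions stated in Appendix~\ref{sec:code_notation} ($\Hf(G,\P) = \Hf(G',\P')$ with $\P'$ the singleton partition of the aggregate), together with the fact that the partition $\P$ carried up to the next level is just the image of the current $\P$ under aggregation. Hence $\Hf(\P)$ is nondecreasing through every step of the outer loop, and strictly increasing whenever any \textsc{MoveNodesFast} call (at any level) performs at least one move.

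Next I would exploit the stability assumption $\P_{t+1} = \P_t$. Trivially this forces $\Hf(\P_{t+1}) = \Hf(\P_t)$. Combined with the monotonicity just established, this forces every \textsc{MoveNodesFast} call in the run to leave $\Hf$ unchanged; by the strictly-positive guard, each such call must in fact perform zero moves. In particular the very first call, on the base graph $G_0$ with initial partition $\P_t$, performs no moves. Since that call initializes its queue with all of $V(G)$ and every node is eventually popped and examined under the unchanging partition $\P_t$, each $v \in V$ satisfies $\max_{C \in \P_t \cup \{\emptyset\}} \Delta\Hf_{\P_t}(v \mapsto C) \leq 0$, i.e.\ $\Delta\Hf_{\P_t}(v \mapsto D) \leq 0$ for every $D \in \P_t$ and for $D = \emptyset$. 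This is precisely Definition~\ref{def:node_optimality} applied to $\P_{t+1} = \P_t$.

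The only place to be careful is the monotonicity bookkeeping across aggregation: I need to make sure that when $\P$ is pushed to the aggregate graph via the assignment on the line following \textsc{AggregateGraph}, the value of $\Hf(\P)$ is literally unchanged, not merely comparable. This follows because the new $\P$ on the aggregate graph $G'$ is obtained by grouping the $\P_\text{refined}$-nodes of $G'$ exactly according to $\P$, so $\flatf^*$ of the new $\P$ equals the old $\P$, and the consistency axiom for $\Hf$ makes the two values agree. With this in hand the argument above goes through without further difficulty; I do not expect any genuinely hard step, only the need to state the monotonicity chain cleanly.
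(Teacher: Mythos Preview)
Your proposal is correct and follows essentially the same approach as the paper: both arguments hinge on the observation that if any profitable single-node move existed in $\P_t$, the first \textsc{MoveNodesFast} call at the base level would execute it, and strict monotonicity of $\Hf$ would then force $\P_{t+1}\neq\P_t$. The paper states this as a two-line proof by contradiction and leaves the monotonicity across aggregation implicit, whereas you spell out the nondecreasing chain through refinement and aggregation explicitly; this extra bookkeeping is sound but not a different route.
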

\begin{proof}
  Suppose we have a stable iteration.
  Hence, $\P_{t + 1} = \P_t$.
  We are going to give a proof by contradiction.
  Assume that $\P_{t + 1} = \P_t$ is not node optimal.
  There then exists a node $v \in C \in \P_t$ and a community $D \in \P_t$ (or $D = \emptyset$) such that $\Delta \Hf(v \mapsto D) > 0$.
  The \textsc{MoveNodesFast} function then moves node $v$ to community $D$.
  This means that $\P_{t + 1} \neq \P_t$ and that the iteration is not stable.
  We now have a contradiction, which implies that the assumption of $\P_{t + 1} = \P_t$ not being node optimal must be false.
  Hence, $\P_{t + 1} = \P_t$ is node optimal.
\end{proof}
In the same way, it is straightforward to see that the Louvain algorithm also guarantees node optimality in a stable iteration.

When the Louvain algorithm reaches a stable iteration, the partition is $\gamma$-separated and node optimal.
Since the Louvain algorithm considers only moving nodes and merging communities, additional iterations of the algorithm will not lead to further improvements of the partition.
Hence, in the case of the Louvain algorithm, if $\P_{t + 1} = \P_t$, then $\P_\tau = \P_t$ for all $\tau \geq t$.
In other words, when the Louvain algorithm reaches a stable iteration, all future iterations will be stable as well.
This contrasts with the Leiden algorithm, which may continue to improve a partition after a stable iteration.
We consider this in more detail below.

\subsection{Asymptotic guarantees}
\label{sec:asymptotic}

When an iteration of the Leiden algorithm is stable, this does not imply that the next iteration will also be stable.
Because of randomness in the refinement phase of the Leiden algorithm, a partition that is stable in one iteration may be improved in the next iteration.
However, at some point, a partition will be obtained for which the Leiden algorithm is unable to make any further improvements.
We call this an asymptotically stable partition.
Below, we prove that an asymptotically stable partition is uniformly $\gamma$-dense and subset optimal.

We first need to show what it means to define asymptotic properties for the Leiden algorithm.
The Leiden algorithm considers moving a node to a different community only if this results in a strict increase in the quality function.
As stated in the following lemma, this ensures that at some point the Leiden algorithm will find a partition for which it can make no further improvements.
\begin{lemma}
  Let $G = (V, E)$ be a graph, and let $\P_{t + 1} = \textsc{Leiden}(G, \P_t)$.
  There exists a $\tau$ such that $\P_t = \P_\tau$ for all $t \geq \tau$.
  \label{lem:convergence}
\end{lemma}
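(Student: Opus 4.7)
The plan is to combine the finiteness of the space of partitions of $V$ with the strict-improvement property of the Leiden algorithm that was already recorded in the discussion preceding the lemma, namely $\Hf(\P_{t+1}) \geq \Hf(\P_t)$ with equality if and only if $\P_{t+1} = \P_t$. Since $V$ has finitely many elements, there are only finitely many partitions of $V$, so $\Hf$ takes values in a finite subset of $\mathbb{R}$ when evaluated on partitions of $V$.

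Given this, my argument proceeds in two short steps. First, monotonicity implies that the sequence $(\Hf(\P_t))_{t \geq 0}$ is non-decreasing and takes values in a finite set; any such sequence must be eventually constant, so there exists $\tau$ with $\Hf(\P_t) = \Hf(\P_\tau)$ for all $t \geq \tau$. Second, applying the contrapositive of the strict-increase property, $\Hf(\P_t) = \Hf(\P_\tau)$ forces $\P_t = \P_\tau$ for all $t \geq \tau$, which is exactly the desired conclusion.

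The delicate point, and where I would verify the details carefully, is the justification that $\P_{t+1} \neq \P_t$ really does imply a strict increase in $\Hf$. The \textsc{MoveNodesFast} function accepts a movement only when $\Delta \Hf > 0$ by the strict inequality on line~\ref{algo:leiden:strict_increase_move} of Algorithm~\ref{algo:leiden}, and the \textsc{AggregateGraph} step preserves the value of $\Hf$ by the quality-function axiom $\Hf(G, \P) = \Hf(G', \P')$ stated in Appendix~\ref{sec:code_notation}. The subtle piece is the randomised refinement phase: line~\ref{algo:leiden:merge_prob} allows mergers with $\Delta \Hf = 0$, so the refined partition $\P_\text{refined}$ can drift at constant quality within a single iteration. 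However, $\P_\text{refined}$ only feeds into the aggregate graph used by subsequent local-move rounds, while the outer partition $\P$ that is ultimately flattened and returned is modified only through strict-improvement moves. Hence the end-to-end sequence $(\P_t)$ retains the strict-increase-on-change property, and the finite-range argument closes the proof.
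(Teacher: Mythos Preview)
Your proof is correct and follows essentially the same approach as the paper: both combine the strict-improvement property $\Hf(\P_{t+1}) > \Hf(\P_t)$ whenever $\P_{t+1} \neq \P_t$ (stated explicitly in the paragraph opening Appendix~\ref{sec:stable_iteration}) with the finiteness of the set of partitions of $V$. The only cosmetic difference is that the paper argues by contradiction via infinitely many pairwise-distinct partitions, whereas you route through the eventually-constant real sequence $(\Hf(\P_t))_t$; your additional paragraph justifying why the refinement phase does not spoil the strict-increase property is a welcome elaboration that the paper leaves implicit.
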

\begin{proof}
  Only strict improvements can be made in the Leiden algorithm.
  Consequently, if $\P_{t + 1} \neq \P_t$, then $\P_{t + 1} \neq \P_{t'}$ for all $t' \leq t$.
  Assume that there does not exist a $\tau$ such that $\P_t = \P_\tau$ for all $t \geq \tau$.
  Then for any $\tau$ there exists a $t > \tau$ such that $\P_t \neq \P_{t'}$ for all $t' < t$.
  This implies that the number of unique elements in the sequence $\P_0, \P_1, \ldots$ is infinite.
  However, this is not possible, because the number of partitions of $G$ is finite.
  Hence, the assumption that there does not exist a $\tau$ such that $\P_t = \P_\tau$ for all $t \geq \tau$ is false.
\end{proof}

According to the above lemma, the Leiden algorithm progresses towards a partition for which no further improvements can be made.
We can therefore define the notion of an asymptotically stable partition.
\begin{definition}
  Let $G = (V, E)$ be a graph, and let $\P_{t + 1} = \textsc{Leiden}(G, \P_t)$.
  We call $\P_\tau$ \emph{asymptotically stable} if $\P_t = \P_\tau$ for all $t \geq \tau$.
\end{definition}

We also need to define the notion of a minimal non-optimal subset.
\begin{definition}
  Let $G = (V, E)$ be a graph, and let $\P$ be a partition of $G$.
  A set $S \subseteq C \in \P$ is called a \emph{non-optimal subset} if $\Delta \Hf(S \mapsto D) > 0$ for some $D \in \P$ or for $D = \emptyset$.
  A set $S \subseteq C \in \P$ is called a \emph{minimal non-optimal subset} if $S$ is a non-optimal subset and if there does not exist a non-optimal subset $S' \subset S$.
  \label{def:minimal_non_optimal_subset}
\end{definition}

The following lemma states an important property of minimal non-optimal subsets.
\begin{lemma}
  Let $G = (V, E)$ be a graph, let $\P$ be a partition of $G$, and let $S \subseteq C \in \P$ be a minimal non-optimal subset.
  Then $\{S\}$ is an optimal partition of the subgraph induced by $S$.
  \label{lem:minimal_non_optimal_subset}
\end{lemma}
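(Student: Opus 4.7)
The plan is to argue by contradiction. Suppose $\{S\}$ is not an optimal partition of the induced subgraph $H(S)$; then there exists a partition $\{S_1,\ldots,S_k\}$ of $S$ with $k \geq 2$ that strictly improves the CPM quality on $H(S)$. My goal is to exhibit one of the $S_i$ as a non-optimal subset of $C$ in $\P$, which (being a proper subset of $S$) contradicts the minimality assumption on $S$.

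First I would rewrite the hypothesis $\Hf_{H(S)}(\{S_1,\ldots,S_k\}) > \Hf_{H(S)}(\{S\})$ using the standard splits $E(S,S) = \sum_i E(S_i,S_i) + \sum_{i<j} E(S_i,S_j)$ and $\binom{\|S\|}{2} = \sum_i \binom{\|S_i\|}{2} + \sum_{i<j}\|S_i\|\cdot\|S_j\|$, which reduces the assumption to the compact form
\[
\Delta_{\mathrm{int}} \;:=\; \sum_{i<j}\left[\gamma\|S_i\|\cdot\|S_j\| - E(S_i,S_j)\right] \;>\; 0.
\]
Next, because $S$ is non-optimal in $\P$, I fix a destination $D^* \in \P \cup \{\emptyset\}$ (necessarily with $D^* \neq C$) such that $\Delta\Hf_\P(S \mapsto D^*) > 0$.

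The heart of the proof is to sum the per-part deltas and establish the identity
\[
\sum_{i=1}^{k} \Delta\Hf_\P(S_i \mapsto D^*) \;=\; \Delta\Hf_\P(S \mapsto D^*) \;+\; 2\,\Delta_{\mathrm{int}}.
\]
This falls out by plugging $C - S_i = (C - S) \cup (S - S_i)$ and $\|C-S_i\| = \|C-S\| + \|S-S_i\|$ into the CPM formula for each $\Delta\Hf_\P(S_i \mapsto D^*)$. After summing over $i$, the pieces involving only $C - S$ and $D^*$ telescope---because the $S_i$ partition $S$---into exactly $\Delta\Hf_\P(S \mapsto D^*)$, while the pieces involving the other parts $S - S_i$ count every ordered pair $(S_i,S_j)$ with $i \neq j$ and collapse to $2\,\Delta_{\mathrm{int}}$. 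Both summands on the right are strictly positive, so at least one $S_{i^\ast}$ must satisfy $\Delta\Hf_\P(S_{i^\ast} \mapsto D^*) > 0$. Since $k \geq 2$ forces $S_{i^\ast}$ to be a proper subset of $S$, this is the promised contradiction.

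The main obstacle is not conceptual but simply making the telescoping bookkeeping clean. Adopting the conventions $\|\emptyset\| = 0$ and $E(\,\cdot\,,\emptyset) = 0$ lets the case $D^* = \emptyset$ be handled jointly with $D^* \in \P$ rather than as a separate case split, and the degenerate case $|S| = 1$ is immediate because $H(S)$ then admits only the partition $\{S\}$ and the statement is vacuous.
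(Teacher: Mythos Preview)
Your argument is correct and follows essentially the same line as the paper's proof: both assume $\{S\}$ is not optimal, fix a destination $D$ with $\Delta\Hf(S\mapsto D)>0$, and then combine the per-part move increments to force some proper subset of $S$ to be non-optimal, contradicting minimality. The only real difference is that the paper works with a bipartition $S=S_1\cup S_2$ satisfying $E(S_1,S_2)<\gamma\|S_1\|\,\|S_2\|$ and argues contrapositively (assuming $\Delta\Hf(S_1\mapsto D)\le 0$ and $\Delta\Hf(S_2\mapsto D)\le 0$ and deriving $\Delta\Hf(S\mapsto D)\le 0$), whereas you keep a general $k$-part refinement and package the computation as the clean identity $\sum_i \Delta\Hf(S_i\mapsto D^*)=\Delta\Hf(S\mapsto D^*)+2\Delta_{\mathrm{int}}$. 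Your version has the small advantage that it sidesteps the need to justify why a strictly improving \emph{bipartition} exists whenever $\{S\}$ is non-optimal, a step the paper asserts without comment; otherwise the two proofs are the same computation read in opposite directions.
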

\begin{proof}
  Assume that $\{S\}$ is not an optimal partition of the subgraph induced by $S$.
  There then exists a set $S_1 \in S$ such that
  \begin{equation}
    E(S_1, S_2) - \gamma \|S_1\| \cdot \|S_2\| < 0,
    \label{eq:split}
  \end{equation}
  where $S_2 = S - S_1$.
  Let $D \in \P$ or $D = \emptyset$ such that $\Delta \Hf(S \rightarrow D) > 0$.
  Hence,
  \begin{equation}
    E(S, D) - \gamma \|S\| \cdot \|D\| > E(S, C - S) - \gamma \|S\| \cdot \|C - S\|.
    \label{eq:ineq_S}
  \end{equation}
  Because $S$ is a minimal non-optimal subset, $S_1$ and $S_2$ cannot be non-optimal subsets.
  Therefore, $\Delta \Hf(S_1 \rightarrow D) \leq 0$ and $\Delta \Hf(S_2 \rightarrow D) \leq 0$, or equivalently,
  \begin{align}
    E(S_1, D) - \gamma \|S_1\| \cdot \|D\| &\leq E(S_1, C - S_1) - \gamma \|S_1\| \cdot \|C - S_1\|
    \label{eq:ineq_S_1}
  \intertext{and}
    E(S_2, D) - \gamma \|S_2\| \cdot \|D\| &\leq E(S_2, C - S_2) - \gamma \|S_2\| \cdot \|C - S_2\|.
    \label{eq:ineq_S_2}
  \end{align}
  It then follows from Eqs.~(\ref{eq:ineq_S_1}) and~(\ref{eq:ineq_S_2}) that
  \begin{align*}
    E(S, D) - \gamma \|S\| \cdot \|D\|
      &= \bigl(E(S_1, D) - \gamma \|S_1\| \cdot \|D\|\bigr) + \bigl(E(S_2, D) - \gamma \|S_2\| \cdot \|D\|\bigr) \\
      &\leq \bigl(E(S_1, C - S_1) - \gamma \|S_1\| \cdot \|C - S_1\|\bigr) + \bigl(E(S_2, C - S_2) - \gamma \|S_2\| \cdot \|C - S_2\|\bigr).
  \end{align*}
  This can be written as
  \begin{alignat*}{3}
    E(S, D) - \gamma \|S\| \cdot \|D\|
    \leq & & \bigl(E(S_1, C - S) + E(S_1, S_2) - \gamma \|S_1\| \cdot \|C - S\| - \gamma \|S_1\| \cdot \|S_2\|\bigr) \\
     &+& \bigl(E(S_2, C - S) + E(S_2, S_1) - \gamma \|S_2\| \cdot \|C - S\| - \gamma \|S_2\| \cdot \|S_1\|\bigr) \\
    =& & E(S, C - S) + 2E(S_1, S_2) - \gamma \|S\| \cdot \|C - S\| - 2\gamma \|S_1\| \cdot \|S_2\|.
  \end{alignat*}
  Using Eq.~(\ref{eq:split}), we then obtain
  \begin{equation*}
    E(S, D) - \gamma \|S\| \cdot \|D\| < E(S, C - S) - \gamma \|S\| \cdot \|C - S\|.
  \end{equation*}
  However, this contradicts Eq.~(\ref{eq:ineq_S}).
  The assumption that $\{S\}$ is not an optimal partition of the subgraph induced by $S$ is therefore false.
\end{proof}

Building on the results for non-decreasing move sequences reported in Appendix~\ref{sec:nondecreasing_move_sequences}, the following lemma states that any minimal non-optimal subset can be found by the \textsc{MergeNodeSubset} function.
\begin{lemma}
  Let $G = (V, E)$ be a graph, let $\P$ be a partition of $G$, and let $S \subseteq C \in \P$ be a minimal non-optimal subset.
  Let $\P_\text{refined} = \textsc{MergeNodesSubset}(G, \{\{v\} \mid v \in V\}, C)$.
  There then exists a move sequence in the \textsc{MergeNodesSubset} function such that $S \in \P_\text{refined}$.
  \label{lem:find_optimal}
\end{lemma}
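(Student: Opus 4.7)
The plan is to exhibit one explicit realization of \textsc{MergeNodesSubset} that assembles $S$ as a single community, and to verify that every step in that realization clears the algorithm's guards and has strictly positive probability under the exponential merger rule. Existence of such a realization then yields $S \in \P_\text{refined}$.

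First I would invoke Lemma~\ref{lem:minimal_non_optimal_subset} to conclude that $\{S\}$ is an optimal partition of the subgraph of $G$ induced by $S$. Applying Theorem~\ref{thm:optimal_seq} to that induced subgraph then produces a non-decreasing move sequence that builds $S$ node by node: an ordering $v_0, v_1, \ldots, v_{|S|-1}$ of the nodes of $S$ such that, writing $C_{t-1} = \{v_0, \ldots, v_{t-1}\}$, every intermediate step satisfies $\Delta \Hf(v_t \mapsto C_{t-1}) \geq 0$, equivalently $E(v_t, C_{t-1}) \geq \gamma \|v_t\| \cdot \|C_{t-1}\|$. Because \textsc{MergeNodesSubset} is initialised with the singleton partition of $G$ and confines all activity to nodes inside $S$, this sequence can be executed verbatim inside the function, provided each individual move is admissible.

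Next I would check compatibility with the guards. Membership $v_t \in R$ on line~\ref{algo:leiden:strict_merge1} amounts to $E(v_t, S - v_t) \geq \gamma \|v_t\| \cdot (\|S\| - \|v_t\|)$, which is immediate from the optimality of $\{S\}$ applied to the bipartition $(\{v_t\}, S - v_t)$ of $S$. Membership $C_{t-1} \in \T$ on line~\ref{algo:leiden:strict_merge2} amounts to $E(C_{t-1}, S - C_{t-1}) \geq \gamma \|C_{t-1}\| \cdot (\|S\| - \|C_{t-1}\|)$, which similarly follows from optimality applied to the bipartition $(C_{t-1}, S - C_{t-1})$ of $S$. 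Combined with $\Delta \Hf(v_t \mapsto C_{t-1}) \geq 0$, the exponential weight on line~\ref{algo:leiden:merge_prob} is strictly positive, so $C_{t-1}$ is an admissible target with positive probability. Since the visit order inside \textsc{MergeNodesSubset} is a random permutation of $R$ and $R$ contains every $v_i$, there is a realization in which the nodes of $S$ are visited exactly in the order dictated by Theorem~\ref{thm:optimal_seq} and in which each random merger choice selects the current prefix $C_{t-1}$. After these $|S|-1$ steps, $S$ appears as a community in $\P_\text{refined}$.

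The main obstacle is the condition on line~\ref{algo:leiden:strict_merge2} for every growing prefix: this forces us to upgrade optimality of $\{S\}$ from the single-node statement ``no vertex wants to leave'' to the uniform bipartition statement ``every $(T, S-T)$ satisfies $E(T, S-T) \geq \gamma \|T\| \cdot \|S - T\|$''. Lemma~\ref{lem:minimal_non_optimal_subset} provides exactly this bipartition inequality, so each intermediate community survives the $\T$-filter and the construction goes through.
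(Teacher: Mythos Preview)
Your overall strategy matches the paper's, but there is a genuine gap in how you verify the algorithm's guards. In the call $\textsc{MergeNodesSubset}(G,\{\{v\}\mid v\in V\},C)$ the third argument is $C$, so the formal parameter ``Subset $S$'' in lines~\ref{algo:leiden:strict_merge1} and~\ref{algo:leiden:strict_merge2} is instantiated with $C$, not with the minimal non-optimal subset $S$ of the lemma. Hence membership in $R$ requires $E(v_t, C - v_t) \geq \gamma \|v_t\|\cdot(\|C\|-\|v_t\|)$, and membership in $\T$ requires $E(C_{t-1}, C - C_{t-1}) \geq \gamma \|C_{t-1}\|\cdot(\|C\|-\|C_{t-1}\|)$. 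These are statements about connectivity inside $C$, and they do \emph{not} follow from the optimality of $\{S\}$ on the induced subgraph, which only controls cuts internal to $S$. The correct justification is minimality (Definition~\ref{def:minimal_non_optimal_subset}): every proper subset $S'\subsetneq S$---in particular each $\{v_t\}$ and each prefix $C_{t-1}$---is \emph{not} a non-optimal subset, so $\Delta\Hf(S'\mapsto\emptyset)\le 0$, i.e.\ $E(S',C-S')\ge\gamma\|S'\|\cdot\|C-S'\|$, which is exactly the guard relative to $C$.

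A second, smaller omission: the function iterates over all well-connected nodes of $C$, not just those in $S$, and $\P_{\text{refined}}$ is the partition after the \emph{entire} loop terminates. You therefore need to argue that the remaining nodes of $C-S$ can leave $S$ intact. This is easy---each such node may, with positive probability, select its own singleton community on line~\ref{algo:leiden:merge_prob} (since $\Delta\Hf=0$ is admissible)---but it should be said; otherwise a later node of $C-S$ might merge into $S$ and destroy the conclusion $S\in\P_{\text{refined}}$.
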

\begin{proof}
  We are going to prove that there exists a move sequence $\P_0, \ldots, \P_{|C|}$ in the \textsc{MergeNodesSubset} function such that $S \in \P_{|C|}$.
  The move sequence consists of two parts, $\P_0, \ldots, \P_{|S|}$ and $\P_{|S|}, \ldots, \P_{|C|}$.
  In the first part, each node in $S$ is considered for moving.
  In the second part, each node in $C - S$ is considered for moving.
  Note that in the \textsc{MergeNodesSubset} function a node can always stay in its own community when it is considered for moving.
  We first consider the first part of the move sequence $\P_0, \ldots, \P_{|C|}$.
  Let $\P_0, \ldots, \P_{|S|}$ be a non-decreasing move sequence such that $\P_0 = \{\{v\} \mid v \in V\}$ and $S \in \P_{|S|}$.
  To see that such a non-decreasing move sequence exists, note that according to Lemma~\ref{lem:minimal_non_optimal_subset} $\{S\}$ is an optimal partition of the subgraph induced by $S$ and that according to Theorem~\ref{thm:optimal_seq} an optimal partition can be reached using a non-decreasing move sequence.
  This non-decreasing move sequence consists of $|S| - 1$ moves.
  There is one node in $S$ that can stay in its own community.
  Note further that each move in the move sequence $\P_0, \ldots, \P_{|S|}$ satisfies the conditions specified in lines~\ref{algo:leiden:strict_merge1} and~\ref{algo:leiden:strict_merge2} in Algorithm~\ref{algo:leiden}.
  This follows from Definition~\ref{def:minimal_non_optimal_subset}.
  In the second part of the move sequence $\P_0, \ldots, \P_{|C|}$, we simply have $\P_{|S|} = \ldots = \P_{|C|}$.
  Hence, each node in $C - S$ stays in its own community.
  Since $S \in \P_{|S|}$, we then also have $S \in \P_{|C|}$.
\end{proof}

As long as there are subsets of communities that are not optimally assigned, the \textsc{MergeNodesSubset} function can find these subsets.
In the \textsc{MoveNodesFast} function, these subsets are then moved to a different community.
In this way, the Leiden algorithm continues to identify better partitions.
However, at some point, all subsets of communities are optimally assigned, and the Leiden algorithm will not be able to further improve the partition.
The algorithm has then reached an asymptotically stable partition, and this partition is also subset optimal.
This result is formalized in the following theorem.
\begin{theorem}
  Let $G = (V, E)$ be a graph, and let $\P$ be a flat partition of $G$.
  Then $\P$ is asymptotically stable if and only if $\P$ is subset optimal.
  \label{thm:subset_optimality}
\end{theorem}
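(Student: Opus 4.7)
The plan is to prove the two directions of the equivalence separately. The forward direction (subset optimal implies asymptotically stable) is largely structural, while the reverse direction (asymptotically stable implies subset optimal) rests on the reachability result of Lemma~\ref{lem:find_optimal} together with the strict positivity of the randomised selection rule on line~\ref{algo:leiden:merge_prob}.

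For the forward direction, I will show that a single call to $\textsc{Leiden}(G, \P)$ must return $\P$ regardless of the random choices made. Subset optimality forces node optimality, so the initial \textsc{MoveNodesFast} is a no-op. The refinement phase can only partition each $C \in \P$ into subsets, producing a $\P_\text{refined}$ each of whose communities is contained in some community of $\P$. After aggregation, each aggregate node is such a flat subset $S \subseteq C$, and the invariance of $\Hf$ under aggregation implies that any move of this aggregate node to another aggregate community is the same as a move $S \mapsto D$ in the base graph. Subset optimality forbids any strict improvement of this form, so the aggregate-level \textsc{MoveNodesFast} is also a no-op. An induction on the aggregation levels then yields $\textsc{Leiden}(G, \P) = \P$ almost surely, and iterating gives $\P_t = \P$ for all $t \geq \tau$.

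For the reverse direction, I argue by contrapositive: assume $\P$ is not subset optimal and exhibit a positive-probability event under which one iteration of Leiden strictly improves $\P$. Pick a non-optimal subset $S \subseteq C \in \P$ of minimum recursive size. By Lemma~\ref{lem:find_optimal}, the call $\textsc{MergeNodesSubset}(G, \{\{v\} \mid v \in V\}, C)$ admits a move sequence producing a refined partition in which $S$ appears as a single community. Every move in this sequence is non-decreasing, so $\Delta \Hf \geq 0$ at each step; combined with $\theta > 0$ and the exponential selection rule, each such move has strictly positive probability, and hence so does the whole sequence. On the event that this sequence is realised, $S$ becomes a single node of the aggregate graph, and the non-optimality of $S$ supplies a target community $D \in \P$ (or $D = \emptyset$) with $\Delta \Hf(S \mapsto D) > 0$. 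The subsequent aggregate-level \textsc{MoveNodesFast} therefore strictly improves the partition, so $\textsc{Leiden}(G, \P) \neq \P$ with positive probability, contradicting asymptotic stability.

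The main obstacle lies in the reverse direction: I must confirm that the move sequence supplied by Lemma~\ref{lem:find_optimal} respects the well-connectedness constraints on lines~\ref{algo:leiden:strict_merge1} and~\ref{algo:leiden:strict_merge2} of Algorithm~\ref{algo:leiden}, which govern which nodes and which candidate targets are even admissible inside \textsc{MergeNodesSubset}. This is exactly where the \emph{minimality} of $S$ is indispensable: Lemma~\ref{lem:minimal_non_optimal_subset} gives that $\{S\}$ is an optimal partition of the subgraph induced by $S$, which allows one to build the required non-decreasing sequence inside $S$ via Theorem~\ref{thm:optimal_seq} and ensures that every move it prescribes satisfies the density thresholds. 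A secondary, milder point is to confirm once more, using the aggregation invariance of $\Hf$, that the strict improvement $\Delta \Hf(S \mapsto D) > 0$ in the base graph is preserved as an improvement at the aggregate level once $S$ has collapsed to a single node.
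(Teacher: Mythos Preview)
Your proposal is correct and follows essentially the same route as the paper: the forward implication is dispatched by observing that subset optimality blocks every strict move at every aggregation level, and the reverse implication proceeds by contrapositive through a minimal non-optimal subset, invoking Lemma~\ref{lem:find_optimal} to realise $S$ as a refined community and then noting that the ensuing aggregate-level \textsc{MoveNodesFast} must move it. Your write-up is more explicit than the paper's (which compresses the forward direction to ``follows directly from the definition'' and leaves the role of $\theta > 0$ and positive probability implicit), and your discussion of the well-connectedness constraints on lines~\ref{algo:leiden:strict_merge1}--\ref{algo:leiden:strict_merge2} is already absorbed into the statement and proof of Lemma~\ref{lem:find_optimal}, so you need not re-argue it.
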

\begin{proof}
  If $\P$ is subset optimal, it follows directly from the definition of the Leiden algorithm that $\P$ is asymptotically stable.
  Conversely, if $\P$ is asymptotically stable, it follows from Lemma~\ref{lem:find_optimal} that $\P$ is subset optimal.
  To see this, assume that $\P$ is not subset optimal.
  There then exists a community $C \in \P$ and a set $S \subset C$ such that $S$ is a minimal non-optimal subset.
  Let $\P_\text{refined} = \textsc{MergeNodesSubset}(G, \{\{v\} \mid v \in V\}, C)$.
  Lemma~\ref{lem:find_optimal} states that there exists a move sequence in the \textsc{MergeNodesSubset} function such that $S \in \P_\text{refined}$.
  If $S \in \P_\text{refined}$, then $S$ will be moved from $C$ to a different (possibly empty) community in line~\ref{algo:leiden:call_move_nodes} in Algorithm~\ref{algo:leiden}.
  However, this contradicts the asymptotic stability of $\P$.
  Asymptotic stability therefore implies subset optimality.
\end{proof}

Since subset optimality implies uniform $\gamma$-density, we obtain the following corollary.
\begin{corollary}
  Let $G = (V, E)$ be a graph, and let $\P$ be a flat partition of $G$.
  If $\P$ is asymptotically stable, then $\P$ is uniformly $\gamma$-dense.
\end{corollary}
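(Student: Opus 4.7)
The plan is to reduce the corollary immediately to Theorem~\ref{thm:subset_optimality} and then unpack the definitions. Since $\P$ is asymptotically stable, Theorem~\ref{thm:subset_optimality} tells us that $\P$ is subset optimal. From this point on, the statement to prove is purely definitional: it remains to show that subset optimality implies uniform $\gamma$-density.

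For this second step I would simply compare Definitions~\ref{def:uniform_gamma_density} and~\ref{def:subset_optimality}. Subset optimality gives $\Delta \Hf(S \mapsto D) \leq 0$ for every community $C \in \P$, every $S \subseteq C$, and every $D \in \P$ or $D = \emptyset$. Specializing to $D = \emptyset$ yields $\Delta \Hf(S \mapsto \emptyset) \leq 0$ for every $S \subseteq C$, which is exactly the condition for $C$ to be uniformly $\gamma$-dense. Since this holds for every $C \in \P$, the partition $\P$ itself is uniformly $\gamma$-dense.

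There is essentially no obstacle here: the hard work has already been absorbed into Theorem~\ref{thm:subset_optimality}, and uniform $\gamma$-density is a strict weakening of subset optimality obtained by restricting the target community to the empty one. The only thing to verify carefully is that the definition of subset optimality indeed permits $D = \emptyset$ as a target, which Definition~\ref{def:subset_optimality} explicitly does. The corollary therefore follows in two lines, and the proof can simply be written as: by Theorem~\ref{thm:subset_optimality}, $\P$ is subset optimal; taking $D = \emptyset$ in Definition~\ref{def:subset_optimality} yields Definition~\ref{def:uniform_gamma_density}.
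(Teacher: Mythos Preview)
Your proposal is correct and matches the paper's own treatment: the corollary is obtained directly from Theorem~\ref{thm:subset_optimality} together with the observation (made explicitly after Definition~\ref{def:subset_optimality}) that restricting to $D = \emptyset$ in subset optimality yields uniform $\gamma$-density.
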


\section{Bounds on optimality}
\label{sec:bound_on_optimality}

In this appendix, we prove that the quality of a uniformly $\gamma$-dense partition as defined in Definition~\ref{def:uniform_gamma_density} in Appendix~\ref{sec:guarantees} provides an upper bound on the quality of an optimal partition.

We first define the intersection of two partitions.
\begin{definition}
  Let $G = (V, E)$ be a graph, and let $\P_1$ and $\P_2$ be flat partitions of $G$.
  We denote the \emph{intersection} of $\P_1$ and $\P_2$ by $\P = \P_1 \sqcap \P_2$, which is defined as
  \begin{equation}
    \P = \{ C \cap D \mid C \in \P_1, D \in \P_2, C \cap D \neq \emptyset \}.
  \end{equation}
\end{definition}
The intersection of two partitions consists of the basic subsets that form both partitions.
For $S, R \in \P = \P_1 \sqcap \P_2$, we write $S \simas{\P_1} R$ if there exists a community $C \in \P_1$ such that $S, R \subseteq C$.
Hence, if $S \simas{\P_1} R$, then $S$ and $R$ are subsets of the same community in $\P_1$.
Furthermore, for $S \neq R$, if $S \simas{\P_1} R$, then we cannot have $S \simas{\P_2} R$, since otherwise $S$ and $R$ would have formed a single subset.
In other words, $S \simas{\P_1} R \Rightarrow S \nsimas{\P_2} R$ and similarly $S \simas{\P_2} R \Rightarrow S \nsimas{\P_1} R$.

The following lemma shows how the difference in quality between two partitions can easily be expressed using the intersection.
\begin{lemma}
  Let $G = (V, E)$ be a graph, let $\P_1$ and $\P_2$ be flat partitions of $G$, and let $\P = \P_1 \sqcap \P_2$ be the intersection of $\P_1$ and $\P_2$.
  Then
  \begin{equation}
    \Hf(\P_2) - \Hf(\P_1) =
    \frac{1}{2} \sum_{\substack{S \simas{\P_2} R \\ S \neq R}} \bigl[E(S, R) - \gamma \|S\| \cdot \|R\|\bigr]
    -
    \frac{1}{2} \sum_{\substack{S \simas{\P_1} R \\ S \neq R}} \bigl[E(S, R) - \gamma \|S\| \cdot \|R\|\bigr].
    \label{eq:diff_partition}
  \end{equation}
  \label{lem:diff_partition}
\end{lemma}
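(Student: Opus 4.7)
The plan is to expand both $\Hf(\P_1)$ and $\Hf(\P_2)$ using the CPM definition in Eq.~(\ref{eq:CPM}) and then decompose each community of $\P_i$ into its constituent blocks of the intersection $\P = \P_1 \sqcap \P_2$. Every community $C \in \P_i$ is a disjoint union of the blocks $S \in \P$ with $S \subseteq C$, so the idea is to rewrite the ``internal'' terms $E(C,C)$ and $\binom{\|C\|}{2}$ as a diagonal part indexed by single blocks plus an off-diagonal part indexed by unordered pairs of distinct blocks that happen to lie in a common community of $\P_i$.

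For the edge count, I would use the bilinearity of $E(\cdot,\cdot)$ on disjoint unions to obtain
\begin{equation*}
  E(C,C) = \sum_{\substack{S \in \P \\ S \subseteq C}} E(S,S) + \frac{1}{2}\sum_{\substack{S,R \in \P,\, S \neq R \\ S,R \subseteq C}} E(S,R),
\end{equation*}
with the factor $\tfrac{1}{2}$ accounting for the unordered pairs. For the recursive-size term I would derive the analogous identity
\begin{equation*}
  \binom{\|C\|}{2} = \sum_{\substack{S \in \P \\ S \subseteq C}} \binom{\|S\|}{2} + \frac{1}{2}\sum_{\substack{S,R \in \P,\, S\neq R \\ S,R \subseteq C}} \|S\|\cdot\|R\|,
\end{equation*}
which follows from squaring $\|C\| = \sum_{S \subseteq C} \|S\|$ and using $\binom{x}{2} = \tfrac{1}{2}(x^2 - x)$.

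Summing the resulting expansion over all $C \in \P_1$ and using the relation $S \simas{\P_1} R$ to index the off-diagonal sums, I would obtain
\begin{equation*}
  \Hf(\P_1) = \sum_{S \in \P} \Bigl[E(S,S) - \gamma\binom{\|S\|}{2}\Bigr] + \frac{1}{2}\sum_{\substack{S \simas{\P_1} R \\ S \neq R}}\bigl[E(S,R) - \gamma\|S\|\cdot\|R\|\bigr],
\end{equation*}
where the diagonal sum ranges over all blocks of $\P$ because each such block is contained in exactly one community of $\P_1$. The same derivation applied to $\P_2$ yields the same diagonal term (it depends only on $\P$, not on which coarser partition we started from) plus the analogous off-diagonal term indexed by $\simas{\P_2}$. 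Subtracting the two expressions makes the diagonal parts cancel and leaves exactly Eq.~(\ref{eq:diff_partition}).

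The argument is essentially bookkeeping, so the main pitfall is avoiding double counting: one must be careful that in the off-diagonal sums the pairs $(S,R)$ are unordered (hence the $\tfrac{1}{2}$) and that the relation $\simas{\P_i}$ is used for $i=1,2$ consistently. The observation noted right before the lemma, namely that $S \simas{\P_1} R$ and $S \simas{\P_2} R$ cannot simultaneously hold for distinct blocks of $\P$, is not actually needed for the proof itself, but it reassures us that the two off-diagonal sums range over disjoint families of pairs, which is useful intuition for why no further cancellations occur.
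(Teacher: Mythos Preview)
Your proposal is correct and follows essentially the same approach as the paper: decompose each community of $\P_k$ into its blocks from $\P$, expand $E(C,C)$ and $\binom{\|C\|}{2}$ into a diagonal part over single blocks plus an off-diagonal part over pairs of distinct blocks in the same $\P_k$-community, sum over $C\in\P_k$ to obtain $\Hf(\P_k)$ as a block-diagonal term plus the $\simas{\P_k}$ cross term, and subtract. Your remark that the disjointness of the two families of pairs is not needed for the proof is also accurate.
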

\begin{proof}
  For any community $C \in \P_k$ ($k = 1, 2$),
  \begin{equation*}
    E(C, C) = \sum_{\substack{S \in \P \\ S \subseteq C}} E(S, S) + \frac{1}{2}\sum_{\substack{S, R \in \P \\ S, R \subseteq C \\ S \neq R}} E(S, R)
  \end{equation*}
  and
  \begin{equation*}
  \binom{\|C\|}{2} = \sum_{\substack{S \in \P \\ S \subseteq C}} \binom{\|S\|}{2} + \frac{1}{2} \sum_{\substack{S, R \in \P \\ S, R \subseteq C \\ S \neq R}} \|S\| \cdot \|R\|.
  \end{equation*}
  We hence obtain
  \begin{align*}
    \Hf(\P_k) =& \sum_{C \in \P_k} \left[E(C, C) - \gamma \binom{\|C\|}{2} \right] \\
    =& \sum_{C \in \P_k} \left[\sum_{\substack{S \in \P \\ S \subseteq C}} \left[E(S, S) - \gamma \binom{\|S\|}{2}\right] +
       \frac{1}{2}\sum_{\substack{S, R \in \P \\ S, R \subseteq C \\ S \neq R}} \left[E(S, R) - \gamma \|S\| \cdot \|R\| \right] \right] \\
    =& \sum_{\substack{S \in \P}} \left[E(S, S) - \gamma \binom{\|S\|}{2}\right] +
       \frac{1}{2}\sum_{\substack{S \simas{\P_k} R \\ S \neq R}} \left[E(S, R) - \gamma \|S\| \cdot \|R\| \right].
  \end{align*}
  The difference $\Hf(\P_2) - \Hf(\P_1)$ then gives the desired result.
\end{proof}

The above lemma enables us to prove the following theorem, stating that the quality of a uniformly $\gamma$-dense partition is not too far from optimal.
We stress that this theorem applies only to unweighted graphs.
\begin{theorem}
  Let $G = (V, E)$ be an unweighted graph, let $\P$ be a uniformly $\gamma$-dense partition of $G$, and let $\P^*$ be an optimal partition of $G$.
  Then
  \begin{equation}
    \Hf(\P^*) - \Hf(\P) \leq (1 - \gamma)\frac{1}{2}\sum_{\substack{C, D \in \P \\ C \neq D}} E(C, D).
    \label{eq:bound_external_edges}
  \end{equation}
  \label{thm:bound_external_edges}
\end{theorem}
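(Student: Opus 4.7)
The plan is to invoke Lemma~\ref{lem:diff_partition} with $\P_1=\P$ and $\P_2=\P^*$, using the intersection $\P\sqcap\P^*$ as the common refinement. The quality gap then splits into two sums of terms of the form $E(S,R)-\gamma\|S\|\cdot\|R\|$: one over pairs of intersection pieces lying in the same community of $\P^*$, the other over pairs lying in the same community of $\P$. Since distinct intersection pieces within a common $\P^*$-community must come from different $\P$-communities (otherwise they would have been merged into a single piece), and symmetrically for $\P$, each summand is well-localised with respect to the ``other'' partition.

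First I would discard the subtracted sum by showing it is nonnegative. For a fixed $C\in\P$ with intersection pieces $S_1,\ldots,S_k$, I would rewrite
$$\sum_{i\ne j}\bigl[E(S_i,S_j)-\gamma\|S_i\|\cdot\|S_j\|\bigr]=\sum_{i=1}^{k}\bigl[E(S_i,C-S_i)-\gamma\|S_i\|\cdot\|C-S_i\|\bigr].$$
Using $\binom{\|C\|}{2}-\binom{\|S_i\|}{2}-\binom{\|C-S_i\|}{2}=\|S_i\|\cdot\|C-S_i\|$, the condition $\Delta\Hf(S_i\mapsto\emptyset)\le 0$ from Definition~\ref{def:uniform_gamma_density} reads exactly $E(S_i,C-S_i)\ge\gamma\|S_i\|\cdot\|C-S_i\|$, so every bracket on the right is nonnegative. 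Summing over $C\in\P$ confirms that the second sum in Lemma~\ref{lem:diff_partition} is $\ge 0$ and can be dropped.

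Next I would bound the surviving sum from above. Unweightedness (together with simplicity) of $G$ gives $E(S,R)\le\|S\|\cdot\|R\|$ for disjoint $S,R$, so for $\gamma\ge 0$
$$E(S,R)-\gamma\|S\|\cdot\|R\|\le E(S,R)-\gamma E(S,R)=(1-\gamma)E(S,R).$$
Summing over pairs with $S\simas{\P^*}R$, $S\ne R$, and using the observation above that these pieces always lie in different $\P$-communities, the resulting edge total is at most $\tfrac12\sum_{C\ne D\in\P}E(C,D)$, the total number of inter-$\P$-community edges (with the same $\tfrac12$ symmetry factor on both sides). Combining the two bounds yields Eq.~(\ref{eq:bound_external_edges}).

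The main obstacle is the book-keeping: checking that distinct intersection pieces within a common $\P^*$-community really do sit in different $\P$-communities, so that $E(S,R)$ contributes only inter-$\P$-community edges; matching the symmetry factors $\tfrac12$ on both sides without double-counting; and pinpointing where the unweighted hypothesis is actually used, namely in the inequality $E(S,R)\le\|S\|\cdot\|R\|$, which is precisely why Theorem~\ref{thm:bound_external_edges} is restricted to unweighted graphs. Everything else is direct assembly of Lemma~\ref{lem:diff_partition} and Definition~\ref{def:uniform_gamma_density}.
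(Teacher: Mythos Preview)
Your proposal is correct and follows essentially the same route as the paper: apply Lemma~\ref{lem:diff_partition}, use uniform $\gamma$-density to drop the $\P$-sum as nonnegative, use $E(S,R)\le\|S\|\cdot\|R\|$ from unweightedness to bound each surviving summand by $(1-\gamma)E(S,R)$, and observe that $S\simas{\P^*}R$ forces $S\nsimas{\P}R$ so that all contributing edges are inter-$\P$-community. Your explicit rewriting $\sum_{i\ne j}[\cdots]=\sum_i[E(S_i,C-S_i)-\gamma\|S_i\|\cdot\|C-S_i\|]$ is a slightly more detailed justification of the nonnegativity step than the paper gives, but the argument is the same.
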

\begin{proof}
  Let $\P' = \P \sqcap \P^*$.
  Consider any $S, R \in \P'$ such that $S \simas{\P^*} R$.
  Because the graph $G$ is unweighted, we have $\|S\| \cdot \|R\| \geq E(S, R)$.
  It follows that
  \begin{equation*}
    E(S, R) - \gamma \|S\| \cdot \|R\|
        \leq (1 - \gamma)E(S, R).
  \end{equation*}
  Furthermore, for any community $C \in \P$, the number of edges connecting this community with other communities is $E(C, V - C)$.
  We therefore have
  \begin{equation*}
    \sum_{S \subseteq C} \sum_{\substack{R \simas{\P^*} S \\ R \neq S}} E(S, R)
        \leq E(C, V - C).
  \end{equation*}
  To see this, note that $R \simas{\P^*} S$ implies $R \nsimas{\P} S$, so that $R \not\subseteq C$.
  For any $C \in \P$, we then obtain
  \begin{equation*}
    \sum_{S \subseteq C} \sum_{\substack{R \simas{\P^*} S \\ R \neq S}} \left[E(S, R) - \gamma \|S\| \cdot \|R\|\right]
        \leq \sum_{S \subseteq C} \sum_{\substack{R \simas{\P^*} S \\ R \neq S}} (1 - \gamma)E(S, R)
        \leq (1 - \gamma)E(C, V - C).
  \end{equation*}
  By summing over all $C \in \P$, this gives
  \begin{equation*}
    \sum_{\substack{S \simas{\P^*} R \\ S \neq R}} \left[E(S, R) - \gamma \|S\| \cdot \|R\|\right]
	    \leq (1 - \gamma)\sum_{\substack{C, D \in \P \\ C \neq D}} E(C, D).
  \end{equation*}
  Furthermore, because $\P$ is uniformly $\gamma$-dense, we have
  \begin{equation*}
    \sum_{\substack{S \simas{P} R \\ S \neq R}} \left[E(S, R) - \gamma \|S\| \cdot \|R\|\right]
        \geq 0.
  \end{equation*}
  Using these results, Eq.~(\ref{eq:bound_external_edges}) follows from Lemma~\ref{lem:diff_partition}.
\end{proof}

For weighted graphs, an upper bound analogous to Eq.~(\ref{eq:bound_external_edges}) is
\begin{equation}
  \Hf(\P^*) - \Hf(\P) \leq \left(1 - \frac{\gamma}{\bar{w}}\right) \frac{1}{2} \sum_{C, D \in \P} E(C, D),
  \label{eq:bound_weighted}
\end{equation}
where $\bar{w} = \max_{i, j} w_{i, j}$ is the maximum edge weight.

For modularity instead of CPM, the upper bound for unweighted graphs in Eq.~(\ref{eq:bound_external_edges}) needs to be adjusted by rescaling the resolution parameter by $2m$.
This gives
\begin{equation}
  \Hf(\P^*) - \Hf(\P) \leq \left(1 - \frac{\gamma}{2m}\right) \frac{1}{2}\sum_{\substack{C, D \in \P \\ C \neq D}} E(C, D).
\end{equation}
The approximation factor of modularity cannot be multiplicative~\cite{Dinh2015}, and indeed our bound is additive.
Depending on the partition $\P$, our bound may be better than the bound provided by an SDP algorithm~\cite{Dinh2015}.

Note that the bound in Eq.~(\ref{eq:bound_external_edges}) reduces the trivial bound of $(1 - \gamma) m$ by $\gamma$ times the number of missing links within communities, i.e., $\gamma \sum_{C} \left[\binom{\|C\|}{2} - E(C, C)\right]$.
To see this, note that $m = \sum_{C} E(C, C) + \frac{1}{2} \sum_{C \neq D} E(C, D)$.
Starting from Eq.~(\ref{eq:bound_external_edges}), we then obtain
\begin{align*}
  \Hf(\P^*) &\leq \Hf(\P) + (1 - \gamma) \frac{1}{2} \sum_{\substack{C, D \in \P \\ C \neq D}} E(C, D) \\
  &= \sum_{C \in \P} \left[E(C, C) - \gamma \binom{\|C\|}{2}\right] + (1 - \gamma) m - (1 - \gamma) \sum_{C \in \P} E(C, C) \\
  &= (1 - \gamma) m - \gamma \sum_{C \in \P} \left[\binom{\|C\|}{2} - E(C, C)\right].
\end{align*}

Finally, Theorem~\ref{thm:bound_external_edges} provides a bound on the quality of the optimal partition for a given uniformly $\gamma$-dense partition, but it does not provide an a priori bound on the minimal quality of a uniformly $\gamma$-dense partition.
Finding such an a priori bound remains an open problem.

\end{document}